\documentclass[11pt,a4paper]{article}

\usepackage{mathtools}
\usepackage{authblk} 
\usepackage{algpseudocode,algorithmicx,algorithm}

\usepackage{mathrsfs}
\usepackage{latexsym,bm}
\usepackage{amsmath,amsfonts,amssymb,amsthm,rotating}
\usepackage{extarrows}

\usepackage{graphicx,subfigure,epstopdf,float}
\usepackage{enumerate,cases,multirow}
\usepackage{makecell}
\usepackage{caption}

\usepackage{longtable,colortbl,arydshln,threeparttable}
\definecolor{mygray}{gray}{.9}

\usepackage{indentfirst}
\usepackage[top=25mm,bottom=20mm,left=25mm,right=20mm]{geometry}
\usepackage{cite}

\usepackage{listings}

\usepackage{makeidx}        
\usepackage{booktabs}
\usepackage[bookmarksnumbered,colorlinks=true,citecolor=red,linkcolor=red,hyperindex,linktocpage=true]{hyperref}

\def \Arg {\mathrm{Arg}}

\newcommand{\polylog}{\mathrm{polylog}}

\newcounter{parentalgorithm}

\makeatother

\newtheorem{theorem}{Theorem}[section]
\newtheorem{lemma}{Lemma}[section]

\theoremstyle{remark}
\newtheorem{remark}{\bf Remark}[section]

\numberwithin{equation}{section}

\usepackage{qcircuit}

\begin{document}

		\title{Hybrid quantum-classical  algorithms for complex nonlinear partial differential equations with Ginzburg-Landau potential and vortex motion laws}
		\author[1]{Shi Jin\thanks{shijin-m@sjtu.edu.cn}}
		\author[1, 2]{Nana Liu\thanks{nana.liu@quantumlah.org} }
		\author[3]{Chuwen Ma\thanks{cwma@math.ecnu.edu.cn}}
		\affil[1]{School of Mathematical Sciences, Institute of Natural Sciences, MOE-LSC, Shanghai Jiao Tong University, Shanghai, 200240, China}
		\affil[2]{Global College, Shanghai Jiao Tong University, Shanghai 200240, China}
		\affil[3]{School of Mathematical Sciences, Key Laboratory of MEA, Ministry of Education, Shanghai Key Laboratory of PMMP, East China Normal University, Shanghai 200241, China
		}
		
			\maketitle
	
		\begin{abstract}
			We propose quantum algorithms for complex-valued  nonlinear  partial differential equations
			in the strongly nonlinear  regime, where the dynamics is governed by vortex
			cores, phase singularities, and nonlinear vortex interactions. 
            Examples include the complex-valued nonlinear Schr\"odinger equation, as well as nonlinear heat and wave equations with Ginzburg--Landau-type nonlinearity.
            In the strongly nonlinear regime, the solutions to these equations are asymptotically governed by, in leading order,  linear elliptic equations, coupled with low-dimensional vortex dynamics, where the vortex cores correspond to topological defects in superconductors. Our hybrid quantum-classical  algorithms utilize this asymptotic property, in which the vortex dynamic is advanced  classically while the boundary-value problem of linear elliptic equation is handled by quantum algorithms.   
            For the two-dimensional nonlinear Schr\"odinger equation, we also combine quantum BPX
            preconditioning with Schr\"odingerization to estimate physically relevant observables in the small-output regime. This yields, already in two dimensions, an {\it exponential} improvement in the dependence on the spatial problem size, while the dependence on the target accuracy remains essentially linear up to polylogarithmic factors.
            We further show that the same
			principle extends to dissipative Ginzburg--Landau vortex dynamics and to vortex
			filaments in three-dimensional superconductivity. Numerical results support the validity of this PDE reduction and the effectiveness of the proposed approach.
		\end{abstract}

		\textbf{Keyword:} {vortex motion, nonlinear Schr\"odinger equation, Ginzburg--Landau dynamics, hybrid quantum-classical methods}
		 \tableofcontents

\section{Introduction}

Quantum algorithms for differential equations have attracted significant
attention in recent years because they offer a possible way to mitigate the
curse of dimensionality for high-dimensional linear problems. In particular,
for linear systems, linear ordinary differential equations (ODEs), and linear
partial differential equations (PDEs) arising from discretization, a broad
literature has been developed around quantum linear-system solvers,
Hamiltonian simulation, and related linear-algebraic primitives, including the
HHL method, LCU-based improvements, QSVT/QSP techniques, spectral and
time-discretization-based ODE solvers, and quantum preconditioning strategies;
see, for example,
\cite{Harrow2009Quantum,Childs2017LCU,LT20,gilyen2019quantum,Wossnig2018,Clader2013,ShaoXiang2018,Berry-2014,BerryChilds2017ODE,Childs-Liu-2020,JinLiuYu2022QDM}.
For non-unitary linear dynamics, several unitarization strategies have also
been proposed, including LCHS-type methods, dilation approaches, and
Schr\"odingerization
\cite{ALL2023LCH,ACL2023LCH2,JLY22SchrLong,Schrshort,DLX25}. In particular,
Schr\"odingerization provides a simple and general framework for lifting linear
non-unitary dynamics to a higher-dimensional Schr\"odinger-type system, and has
been extended to a range of linear PDEs
\cite{JLLY23ABC,JLC23Maxwell,MJLWZ24,JLM24SchrBackward,JinLiu-LA,analogPDE,JL24JaynesCummings}.

For genuinely nonlinear PDEs, however, the situation is much less understood.
A natural strategy is to search for a reformulation in which the nonlinear
problem becomes linear while preserving the relevant computational advantage.
In some special classes of equations, such exact or time-discretely approximate
transformations are indeed available. Examples include the Cole--Hopf transform
for viscous Burgers-type equations and level-set-based linear representations
for certain Hamilton--Jacobi and hyperbolic equations
\cite{JinOsher2003,JinLiu2024NonlinearPDE,JinLiu2025ViscosityHJ}. But such
structures are rare, highly problem-dependent, and do not provide a general
route for nonlinear PDEs. More generally, exact linear representations for
nonlinear ODEs, such as Koopman--von Neumann type formulations, become much
less effective for PDEs, since one must first discretize in space, producing
very large nonlinear ODE systems whose dimension depends on the numerical
resolution \cite{Joseph2020,JinLiu2022Observables} thus lack quantum advantages.

Another broad class of approaches is based on approximation, embedding, or
operator lifting. This includes local linearization, weakly nonlinear
expansions, Carleman-type embeddings, and related observable-based,
Koopman-type, or linear-operator formulations
\cite{JinLiu2022Observables,LiuKoldenKrovi2021,LiuAnFang2023,JinLiu2025ViscosityHJ,Bravyi2025NoisyNonlinear}.
These methods can be useful in perturbative regimes, but they are generally
restricted to weak nonlinearities, polynomial-type nonlinear structure, or
short times, and may fail to capture the genuinely nonlinear phenomena that
develop in many PDEs, such as singularities, shocks, or other coherent
structures.
From the perspective of the present work, the main issue is not simply the size
of the nonlinear term, but the fact that the essential dynamics is often
carried by singular geometric objects that are not naturally revealed by
perturbative linearization.

A different line of work reformulates the governing equations in alternative
physical variables that are more compatible with quantum evolution, for
example Madelung-type or hydrodynamic Schr\"odinger-type descriptions for
fluid dynamics \cite{MengYang2023HSE}. Such formulations may be meaningful in
specific contexts, but they differ from the present work in that they
redesign the full nonlinear dynamics, rather than isolating the component most
advantageous for quantum computation.

The present paper follows a different route. Rather than linearizing the full
nonlinear field equation, we utilize the rigorous asymptotic reduction in the
strongly nonlinear regime. Examples of such problems include the complex-valued nonlinear Schr\"odinger equation, and the nonlinear heat and wave equations, with nonlinear reaction term governed by the Ginzburg-Landau potential. Here the nonlinear field dynamics can be asymptotically decomposed into two
components of very different nature:  a high-dimensional but linear boundary-value elliptic (or parabolic for nonlinear wave equations) problem coupled with a low-dimensional evolution of vortex
dynamics. In the two-dimensional setting,
the reduced vortex motion is governed by the classical point-vortex law in the
nonlinear Schr\"odinger case and by a gradient-flow law in the dissipative
Ginzburg--Landau case \cite{LinXin1998,Lin1999,Neu1990,E1994}. This
linear-nonlinear  asymptotic decomposition provides a natural entry point for quantum
algorithms: the singular nonlinear structure is retained in a small classical
system, whereas the genuinely high-dimensional part is linear and thus suitable
for quantum treatment.

Based on this idea, we formulate a hybrid quantum-classical framework. The vortex locations are advanced classically, whereas the linear elliptic subproblem is discretized into a sparse linear system and treated through BPX preconditioning and Schr\"odingerization-based quantum linear solvers. Since our goal is not to reconstruct the full outer field, but only to estimate a small number of physically relevant observables, the natural computational regime is the small-output regime. In this setting, for the two-dimensional nonlinear Schr\"odinger equation, the resulting complexity exhibits an {\it exponential} improvement in the dependence on the spatial problem size, while the dependence on the target accuracy remains essentially linear up to polylogarithmic factors. In particular, this advantage already appears in the two-dimensional setting considered in the present paper.

This approach should also be contrasted with the two classes of methods mentioned above. In the present work, we do not attempt to encode the entire nonlinear PDE into a larger linear/operator framework, nor do we seek a direct alternative field representation for the full nonlinear dynamics. Instead, we use the vortex-regime asymptotic reduction to split the problem into a low-dimensional nonlinear vortex subsystem and a high-dimensional linear subproblem of elliptic type. This distinction is important from the computational viewpoint: one does not gain much by simulating the reduced low-dimensional nonlinear subsystem on a quantum device. Here, the quantum computer is used only for the large, higher-dimensional linear subproblem.

The main contributions of this work are as follows. First, based on the asymptotic theory in the strongly nonlinear regime, we develop a hybrid quantum-classical framework for the two-dimensional nonlinear Schr\"odinger equation by combining classical vortex evolution with quantum treatment of the harmonic correction through BPX preconditioning and Schr\"odingerization; this leads, in the small-output regime, to an {\it exponential} improvement in the dependence on the spatial problem size, while the dependence on the target accuracy remains essentially linear up to polylogarithmic factors. Second, we show that the same approach applies beyond the nonlinear Schr\"odinger setting, including dissipative Ginzburg--Landau vortex dynamics and filamentary models in three-dimensional superconductivity, and we support this framework with numerical experiments.

The rest of the paper is organized as follows. In Section~\ref{sec:reduction}, we review the asymptotic behavior of the nonlinear Schr\"odinger equation and derive the corresponding linear harmonic correction problem together with the reduced vortex motion laws. Section~\ref{sec:schrodingerization} recalls the Schr\"odingerization framework for linear subroutines. In Section~\ref{sec:hybrid quantum-classical formulation}, we present the hybrid quantum-classical formulation for the asymptotic model. Section~\ref{sec:GL and 3D extensions} extends the same idea to Ginzburg--Landau vortex dynamics and to three-dimensional filamentary problems in superconductivity. Section~\ref{sec:computational perspectives} discusses the computational implications of the framework, including the resulting exponential improvement in the dependence on the spatial problem size and the essentially linear dependence on the target accuracy up to polylogarithmic factors in the small-output regime, and Section~\ref{subsec:numerical-verification} reports numerical tests supporting the reduction.

		\section{Asymptotic behavior with vortex dynamics}
		\label{sec:reduction}
		
		In this section, we explain how the nonlinear field dynamics is asymptotically reduced  to a
		coupled system consisting of: (i) a low-dimensional evolution of vortex
		locations, and (ii) a linear boundary-value problem for a smooth correction
		field associated with a given vortex configuration. This decomposition is the
		basic analytical structure behind our later hybrid quantum-classical algorithm.
		
		\subsection{The model and the asymptotic vortex ansatz}
		\label{sec:model}
		We consider the two-dimensional nonlinear Schr\"odinger equation
		\begin{equation}\label{eq:NLS-2D}
			\left\{
			\begin{aligned}
				i\partial_t u^\varepsilon
				&=-
				\Delta u^\varepsilon+\frac{1}{\varepsilon^2}(1-|u^\varepsilon|^2)u^\varepsilon
				&&\text{in } \Omega\times(0,T),\\
				u^\varepsilon&=g
				&&\text{on } \partial\Omega\times(0,T),
			\end{aligned}
			\right.
		\end{equation}
		defined in a bounded, simply connected domain \(\Omega\subset\mathbb R^2\), where
		\(g:\partial\Omega\to S^1\) is a smooth boundary datum. The associated
		Ginzburg--Landau energy is
		\begin{equation}\label{eq:GL-energy}
			E^\varepsilon(u^\varepsilon)
			=
			\int_\Omega
			\left(
			\frac12 |\nabla u^\varepsilon|^2
			+
			\frac{(1-|u^\varepsilon|^2)^2}{4\varepsilon^2}
			\right)\,d\bm x.
		\end{equation}
		
		Throughout this section, we work under the following assumptions.
		\begin{itemize}
			\item[(\textbf{H1})]
			The boundary datum satisfies
			$
			\deg(g,\partial\Omega)=M>0.
			$
			
			\item[(\textbf{H2})]
			restrict to  the same-sign unit-vortex regime, namely, the
			limiting configuration consists of \(M\) distinct vortices
			\[
			\bm a(t)=(\bm a_1(t),\dots,\bm a_M(t))\subset\Omega,
			\]
			and each vortex has degree \(+1\).
			
			\item[(\textbf{H3})]
			The solution lies in the vortex regime in the sense that
			\begin{equation}\label{eq:vortex-regime-assumption}
				E^\varepsilon(u^\varepsilon)\le M\pi\log\frac1\varepsilon+C_0,
			\end{equation}
			and the Jacobian concentrates at the distinct vortex points
			\(\bm a_1(t),\dots,\bm a_M(t)\).
		\end{itemize}
		
		The above assumptions are adopted only to simplify notation and exposition.
		The general theory allows mixed-sign and higher-degree vortices; we refer to
		\cite{Lin1999} for the corresponding rigorous statements. These generalizations
		do not affect the main mechanism emphasized in this paper, namely, the reduction
		of the nonlinear field dynamics to a low-dimensional vortex system coupled with
		a linear correction problem.
		
		When $\varepsilon<<1$, the leading order asymptotics of solution to  \eqref{eq:NLS-2D} yields $(1-|u^\varepsilon|^2)u^\varepsilon=0$. Thus either $|u^\varepsilon|=1$, or $u^\epsilon=0$. In the first case  $u$ is a complex function of modulus $1$. The case $h^\varepsilon=0$ corresponds to the so-called {\it vortices}, which are the defects in the applications of superconductor modeling. Hence the solution to \eqref{eq:NLS-2D} is, asymptotically, the coupling of these two type of solutions.  Under assumptions \textbf{(H2)}--\textbf{(H3)}, assume that the vortices are  concentrated at
		the points \(\bm a_1(t),\dots,\bm a_M(t)\), and away from these vortex cores the
		modulus \(|u^\varepsilon|\) remains close to \(1\). Consequently, the leading-order
		behavior is governed by the phase. The singular part of the phase is determined by
		the vortex locations and is given by
		\begin{equation}\label{eq:Theta-a}
			\Theta_{\bm a}(\bm x)=\sum_{j=1}^M \arg(\bm x-\bm a_j).
		\end{equation}
		Equivalently,
		$
		e^{i\Theta_{\bm a}(\bm x)}
		=
		\prod_{j=1}^M \frac{\bm x-\bm a_j}{|\bm x-\bm a_j|}
		$
		away from the vortex cores.
		
		According to \cite{Lin1999}, on the time scale \(t=O(1)\), the linear momentum
		\(p(u^\varepsilon)=u^\varepsilon\wedge \nabla u^\varepsilon\) converges weakly,
		away from the vortex trajectories, to a vector field \(\bm v\) on
		\[
		\Omega_{\bm a}(t):=\Omega\setminus\{\bm a_1(t),\dots,\bm a_M(t)\}.
		\]
		The limiting field \(\bm v\) solves the incompressible Euler equations
		\[
		\bm v_t=2\,\bm v\cdot\nabla \bm v-2\nabla P,\qquad
		\operatorname{div}\bm v=0\quad\text{in }\Omega_{\bm a}(t),\qquad
		\bm v\cdot\tau=g\wedge g_\tau\quad\text{on }\partial\Omega,
		\]
		where \(P\) denotes the associated pressure, \(\tau\) is the unit tangential
		vector on \(\partial\Omega\), and
		\(g_\tau:=\nabla g\cdot\tau\) is the tangential derivative of \(g\). Moreover,
		\(\bm v\) admits the representation
		\begin{equation}\label{eq:vel}
			\bm v=\nabla(\Theta_{\bm a}+h_{\bm a}),
		\end{equation}
		where \(h_{\bm a}\) is a harmonic correction in \(\Omega\).
		
		Consistent with the limiting description above, as \(\varepsilon\to0\), the solution admits the asymptotic vortex ansatz
		\begin{equation}\label{eq:asymptotic-ansatz}
			u^\varepsilon \rightharpoonup
			u^0(\bm x)=u_{\bm a}(\bm x):=
			e^{i(\Theta_{\bm a}(\bm x)+h_{\bm a}(\bm x))}
			=
			e^{ih_{\bm a}(\bm x)}
			\prod_{j=1}^M \frac{\bm x-\bm a_j}{|\bm x-\bm a_j|},
		\end{equation}
		away from the vortex cores.

		\subsection{The harmonic correction associated with a vortex configuration}
		\label{sec:harmonic correction}
		Once the vortex locations \(\bm{a}=(\bm{a}_1,\dots,\bm{a}_M)\) are fixed, the singular phase \(\Theta_a\) is determined by \eqref{eq:Theta-a}. Hence, in \eqref{eq:asymptotic-ansatz}, the only remaining unknown is the harmonic correction \(h_a\).
		
		To see formally why \(h_{\bm a}\) is harmonic, recall from \eqref{eq:vel} that
		the limiting velocity field satisfies
		\[
		\bm v=\nabla(\Theta_{\bm a}+h_{\bm a}),
		\qquad
		\operatorname{div}\bm v=0
		\quad\text{in }\Omega_{\bm a}.
		\]
		Hence
		\[
		\Delta(\Theta_{\bm a}+h_{\bm a})=0
		\quad\text{in }\Omega_{\bm a}.
		\]
		Since \(\Theta_{\bm a}\) is harmonic away from the vortex points, namely
		$ \Delta\Theta_{\bm a}=0 $ in $\Omega_{\bm a}$,
		it follows formally that \(h_{\bm a}\) is harmonic in \(\Omega\).
		
		By \cite[Theorem 1.1]{Lin1999}, the limiting velocity field also satisfies
		\[
		\bm v\cdot\tau=g\wedge g_\tau
		\qquad\text{on }\partial\Omega.
		\]
		Taking the tangential component of
		\(\bm v=\nabla(\Theta_{\bm a}+h_{\bm a})\) on \(\partial\Omega\), one obtains
		\[
		\bm v\cdot\tau
		=
		\nabla(\Theta_{\bm a}+h_{\bm a})\cdot\tau
		=
		\Theta_{{\bm a},\tau}+h_{{\bm a},\tau}.
		\]
		Comparing this identity with the boundary condition above yields
		\begin{equation}\label{eq:ha-tangential-bc}
			h_{{\bm a},\tau}
			=
			-\Theta_{{\bm a},\tau}+g\wedge g_\tau
			\qquad\text{on }\partial\Omega,
		\end{equation}
		where \(h_{{\bm a},\tau}:=\nabla h_{\bm a}\cdot\tau\) and
		\(\Theta_{{\bm a},\tau}:=\nabla\Theta_{\bm a}\cdot\tau\).
		Thus \(h_{\bm a}\) is determined up to an additive constant.
		
		The function \(\Theta_{\bm a}\) carries the topological singularities generated
		by the vortices, whereas \(h_{\bm a}\) describes the regular contribution
		required to match the boundary condition. Equivalently, after fixing a boundary
		phase lifting \(\phi_g\) on \(\partial\Omega\) such that \(g=e^{i\phi_g}\),
		\eqref{eq:ha-tangential-bc} may be recast as the Dirichlet problem
		\begin{equation}\label{eq:ha-Dirichlet}
			\left\{
			\begin{aligned}
				\Delta h_{\bm a} &=0
				\qquad &&\text{in }\Omega,\\
				h_{\bm a}&=\phi_g-\Theta_{\bm a}
				\qquad &&\text{on }\partial\Omega,
			\end{aligned}
			\right.
		\end{equation}
		modulo the natural \(2\pi\)-ambiguity of the phase. This equivalent
		formulation is convenient for the discrete treatment in the next section,
		since it shows that once the vortex configuration is given, the regular
		correction \(h_{\bm a}\) is recovered from a standard boundary-value problem of the ({\it linear}) Laplace equation.

		\subsection{Renormalized energy and the point-vortex motion law}
		\label{sec:motion law}
		
		The effective interaction among the vortices is described by the renormalized
		energy. For the limiting configuration
		$
		u_{\bm a}(x)
		=
		e^{ih_{\bm a}(x)}
		\prod_{j=1}^M \frac{x-\bm a_j}{|x-\bm a_j|},
		$
		the renormalized energy \(W(\bm a)=W(\bm a_1,\dots,\bm a_M)\) is defined by
		\begin{equation}\label{eq:renormalized-energy}
			W(\bm a)
			=
			\lim_{r\downarrow 0}
			\left[
			\frac{1}{2\pi}
			\int_{\Omega\setminus \bigcup_{j=1}^M B_r(\bm a_j)}
			|\nabla u_{\bm a}|^2\,d\bm x
			-
			M\log\frac1r
			\right]
			+\gamma M,
		\end{equation}
		where \(B_r(\bm a_j):=\{\bm x\in\mathbb R^2:\ |\bm x-\bm a_j|<r\}\) denotes the open disk
		of radius \(r\) centered at \(\bm a_j\), and \(\gamma\) is a universal constant.
		This is the finite part of the energy
		after subtracting the singular core contribution of the vortices.
		
		To obtain a closed reduced dynamics, we further restrict attention to the
		almost-minimizing regime
		\begin{equation}\label{eq:almost-minimizing}
			E^\varepsilon(u^\varepsilon)(0)
			=
			M\pi\log\frac1\varepsilon+\pi W(\bm a(0))+o(1).
		\end{equation}
		Under this assumption, the defect measure vanishes and the vortex motion is
		completely determined by the renormalized energy. More precisely, by
		\cite[Theorem~1.2]{Lin1999}, if \(H_j\) denotes the smooth part of
		\(\Theta_{\bm a}+h_{\bm a}\) near the \(j\)-th vortex, then
		\begin{equation}\label{eq:rigorous-vortex-law}
			\dot{\bm a}_j
			=
			J\nabla_{\bm a_j}W(\bm a)
			=
			-2\nabla H_j(\bm a_j), \qquad
			J=
			\begin{pmatrix}
				0&-1\\
				1&0
			\end{pmatrix},
			\qquad j=1,\dots,M.
		\end{equation}
		
		For the present same-sign unit-vortex setting, this law can be written in a
		directly computable form. Indeed, near \(\bm x=\bm a_j\), the smooth part is
		\(
		H_j(\bm x)=\sum_{k\neq j}\arg(\bm x-\bm a_k)+h_{\bm a}(\bm x)
		\),
		so using
		\(
		\nabla\arg(\bm x-\bm a_k)=\frac{(\bm x-\bm a_k)^\perp}{|\bm x-\bm a_k|^2}
		\)
		with
		\(
		\bm x^\perp:=(-x_2,x_1)
		\),
		we obtain
		\begin{equation}\label{eq:M2-model}
			\dot{\bm a}_j
			=
			-2\left(
			\sum_{k\neq j}\frac{(\bm a_j-\bm a_k)^\perp}{|\bm a_j-\bm a_k|^2}
			+
			\nabla h_{\bm a}(\bm a_j)
			\right),
			\qquad j=1,\dots,M.
		\end{equation}
		Hence the vortex dynamics consists of an explicit pairwise Kirchhoff interaction
		together with a boundary-induced drift encoded by the harmonic correction
		\(h_{\bm a}\).
		
		When the boundary effect is negligible over the time interval of interest, one
		may further adopt the free-space approximation
		\begin{equation}\label{eq:M1-model}
			\dot{\bm a}_j
			=
			-2\sum_{k\neq j}\frac{(\bm a_j-\bm a_k)^\perp}{|\bm a_j-\bm a_k|^2},
			\qquad j=1,\dots,M.
		\end{equation}
		This reduced law is consistent with the classical matched-asymptotic derivation
		of Neu~\cite{Neu1990} in the whole-plane setting. For bounded domains, however,
		the rigorous motion law remains \eqref{eq:rigorous-vortex-law}, or equivalently
		\eqref{eq:M2-model}.
        
      For later use, we refer to the fully coupled bounded-domain model
\eqref{eq:M2-model} as \emph{M2}, and to the free-space approximation
\eqref{eq:M1-model} as \emph{M1}.

		\begin{remark}
			The bounded-domain reduced vortex laws for the nonlinear Schr\"odinger equation
			have also been investigated numerically by \cite{BaoTang2014},
			who compared the reduced ODE dynamics with direct simulations of the full NLSE
			under both Dirichlet and homogeneous Neumann boundary conditions. Their study
			provides useful numerical evidence that the reduced laws capture the leading-order
			vortex motion when $\varepsilon$ is small, while deviations may arise when boundary
			effects, radiation, or vortex--sound interactions become significant. In the present
			work, we use the same reduced-law viewpoint for a different purpose: not to build a
			standalone classical vortex solver, but to expose the nonlinear-to-linear decomposition
			that makes the outer-field computation amenable to quantum treatment.
		\end{remark}
		
		\section{A brief review of Schr\"odingerization}
		\label{sec:schrodingerization}
		
		Schr\"odingerization provides a general mechanism for converting linear but
		typically non-unitary dynamics into a Schr\"odinger-type evolution that can be
		handled by quantum simulation. The central idea is to embed the original problem
		into one higher dimension through an auxiliary variable, so that the resulting
		evolution is generated by a Hermitian Hamiltonian. In this way, linear ODEs,
		PDEs, and linear-algebraic subroutines can be reformulated in a form compatible
		with unitary quantum computation.
		
		For the linear system
		\[
		K\bm x=\bm b,
		\]
		where \(K\in\mathbb R^{N_h\times N_h}\) is a symmetric positive definite matrix
		arising from spatial discretization, and \(N_h\) is the number of spatial
		degrees of freedom.
		We start from a convergent stationary iteration
		\[
		\bm x^{\,\mathrm{new}}
		=
		\bm x^{\,\mathrm{old}}+B(\bm b-K\bm x^{\,\mathrm{old}}),
		\]
		where \(B\) is an iterator, or more generally a preconditioner. The convergence
		of this iteration is equivalent to the relaxation of the linear ODE
		\[
		\frac{d\bm u}{dt}
		=
		-BK\bm u+B\bm b,
		\qquad
		\bm u(0)=\bm u_0,
		\]
		whose steady state is \(\bm u_\infty=\bm x\).
		
		For quantum computation, it is preferable to work with a symmetrically
		preconditioned form. Writing
		$
		B=SS^\top,
		$
		we introduce
		\[
		K_S:=S^\top K S,
		\qquad
		\bm b_S:=S^\top \bm b,
		\qquad
		\bm u(t)=S\bm z(t),
		\]
		so that the original problem is reduced to
		\[
		\frac{d\bm z}{dt}
		=
		-K_S\bm z+\bm b_S.
		\]
		This step is crucial: it replaces the generally nonsymmetric operator \(BK\) by
		the symmetric preconditioned operator \(K_S\), and it aligns naturally with the
		block-encoding framework used later for quantum implementation.
		
		To remove the inhomogeneous term, we augment the system from
		\cite{JinLiuMa2025} and write
		\[
		\bm z_f(t)
		=
		\begin{bmatrix}
			\bm z(t)\\
			T\bm b_S
		\end{bmatrix},
		\qquad
		\frac{d\bm z_f}{dt}
		=
		K_f\bm z_f,
		\qquad
		K_f=
		\begin{bmatrix}
			-K_S & I/T\\
			O & O
		\end{bmatrix},
		\]
		where
		$
		T=\Theta\!\left(\frac{1}{\lambda_{\min}(BK)}\log\frac{1}{\varepsilon}\right)
		$
		is chosen so that the relaxation reaches accuracy \(\varepsilon\), namely
		\[
		\|\bm x-\bm u(T)\|\le \varepsilon \|\bm x\|.
		\]
		The solution of the original linear system is encoded in the first block of
		\(\bm z_f(T)\).
		
		Schr\"odingerization is then carried out through a warped phase transform in an
		auxiliary variable \(p\). Introducing
		$\bm w(t,p)=e^{-p}\bm z_f(t) $, for $p>0$,
		and extending the equation naturally to the whole real line in \(p\), one
		obtains
		\[
		\partial_t \bm w
		=
		-H_1\,\partial_p\bm w+iH_2\bm w,
		\qquad
		\bm w(0,p)=\psi(p)\bm z_f(0),
		\]
		where \(\psi\) is chosen so that \(\psi(p)\approx e^{-p}\) for \(p\ge 0\); see
		\cite{JLMPY2026}. A typical choice is \(\psi(p)=e^{-|p|}\). Here
		$H_1=\frac{K_f+K_f^\dagger}{2}$ and $ H_2=\frac{K_f-K_f^\dagger}{2i} $
		are Hermitian. Thus the original dissipative linear problem is transformed into
		a Schr\"odinger-type equation in one higher dimension. Moreover, the original
		variable is recovered by
		\[
		\bm z_f(t)=e^{p}\bm w(t,p), \qquad
		p\ge p_3:=\lambda_{\max}^+(H_1)\,T=1/2.
		\]
		
		\subsection{Implementation of the Schr\"odingerized evolution}
		
		After truncating the auxiliary variable \(p\) to \([-R,R]\), we discretize the
		\(p\)-direction on the uniform grid
		\[
		p_k=-R+k\Delta p,\qquad k=0,\dots,N_p-1,\qquad \Delta p=2R/N_p,
		\]
		with \(N_p=2^{n_p}\). Applying a Fourier spectral discretization, the
		Schr\"odingerized system is discretized to
		\[
		\frac{d}{dt}\bm W_h(t)=-i\widetilde H\bm W_h(t),
		\qquad
		\widetilde H=(\Phi D_p\Phi^{-1})\otimes H_1-I\otimes H_2,
		\]
		where \(D_p=\operatorname{diag}(\mu_0,\dots,\mu_{N_p-1})\) collects the Fourier
		wave numbers in the auxiliary \(p\)-direction with
		$\mu_\ell=\frac{\pi(\ell-N_p/2)}{R}$, 
		and therefore represents the operator \(-i\partial_p\) after spectral
		discretization. Here
		\[
		\Phi=(e^{i\mu_\ell p_k})_{k,\ell=0}^{N_p-1}
		\]
		denotes the discrete Fourier transform matrix in the \(p\)-register. Hence
		\[
		\bm W_h(T)=e^{-i\widetilde HT}\bm W_h(0),\qquad
		\bm W_h(0)=\bm\psi_h\otimes \bm z_f(0),
		\]
		with
		$
		\bm\psi_h=(\psi(p_0),\dots,\psi(p_{N_p-1}))^\top.
		$
		
		Suppose that we are given the input-state preparation oracle
		\[
		O_{\mathrm{prep}}:\ |0\rangle|0\rangle \mapsto |\bm\psi_h\rangle|\bm z_f(0)\rangle,
		\]
		together with a block-encoding of the symmetrically preconditioned operator
		\(K_S=S^\top K S\), from which the Hamiltonian simulation oracle
		\[
		O_H(\tau):\ |0\rangle|\bm\varphi\rangle \mapsto |0\rangle e^{-iH\tau}|\bm\varphi\rangle,
		\qquad
		H=D_p\otimes H_1-I\otimes H_2
		\]
		for \(|\tau|\le T\) is constructed. Then the Schr\"odingerized evolution is
		implemented coherently as
		\[
		|0\rangle|0\rangle
		\xrightarrow{\ O_{\mathrm{prep}}\ }
		|\bm\psi_h\rangle|\bm z_f(0)\rangle
		\xrightarrow{\ \mathrm{QFT}\otimes I\ }
		(\Phi^{-1}|\bm\psi_h\rangle)\otimes |\bm z_f(0)\rangle
		\xrightarrow{\ O_H(T)\ }
		e^{-iHT}\bigl((\Phi^{-1}|\bm\psi_h\rangle)\otimes |\bm z_f(0)\rangle\bigr)
		\xrightarrow{\ \mathrm{QFT}\otimes I\ }
		|\bm W_h(T)\rangle .
		\]
		
		To recover the steady-state component, define the recovery projector
		\[
		\Pi_{\mathrm{rec}}
		=
		\sum_{k\in \mathcal I_3}|k\rangle\langle k|\otimes I,
		\qquad
		\mathcal I_3:=\{k:\ p_k\ge p_3,\ \ p_k=O(1)\}.
		\]
		Conditioned on measuring the auxiliary register in \(\mathcal I_3\), the
		post-measurement state yields
		\[
		|k\rangle\otimes |\bm z_f(T)\rangle
		\]
		up to the known factor \(e^{-p_k}\) introduced by the warped phase transform.
		Hence the first block of the recovered state gives an approximation to
		\(|\bm z(T)\rangle\), and therefore to
		\[
		|\bm h\rangle=|S\bm z(T)\rangle.
		\]
		
		We are not interested in reconstructing the full vector \(\bm h\). Instead, for
		a linear observable
		\[
		\mathcal O(\bm h)=\langle \bm c,\bm h\rangle,
		\]
		we write
		\[
		\mathcal O(\bm h)
		=
		\langle \bm c,S\bm z(T)\rangle
		=
		\langle S^\top \bm c,\bm z(T)\rangle.
		\]
		Let
		$|c_S\rangle := \frac{S^\top \bm c}{\|S^\top \bm c\|} $
		be the normalized measurement state, prepared by an oracle
		\[
		O_{\mathrm{meas}}:\ |0\rangle \mapsto |c_S\rangle .
		\]
		Then the desired quantity reduces to the overlap
		\[
		\mathcal O(\bm h)
		=
		\|S^\top \bm c\|\,\|\bm z(T)\|\,
		\langle c_S|z(T)\rangle,
		\]
		which can be estimated by a Hadamard test or amplitude estimation. Thus the
		quantum part of the algorithm consists of the coherent preparation of the input
		state, Hamiltonian simulation of \(e^{-iHT}\), postselected recovery of the
		steady-state component, and overlap estimation against the measurement state
		\(|c_S\rangle\).
		
		\begin{theorem}[Schr\"odingerization-based estimation of linear observables]
			\label{thm:complexity}
			Let \(K_S=S^\top K S\) be the symmetrically preconditioned operator, and assume
			that \(K_S\) admits a block-encoding with normalization factor
			\[
			\alpha_{K_S}=O(\operatorname{poly}(d)\log(N_h)).
			\]
			Assume moreover that
			$\lambda_{\min}(K_S)=\Theta(1) $, $\lambda_{\max}(K_S)=\Theta(1)$, 
			and that the state-preparation oracles \(O_{\mathrm{prep}}\) and
			\(O_{\mathrm{meas}}\) for the normalized states
			\(|\bm z_f(0)\rangle\) and \(|\bm c_S\rangle\), respectively, are
			available. Then, for any \(\epsilon\in(0,1)\), there exists a quantum
			algorithm which outputs an estimate \(\widetilde{\mathcal O}\) of the linear
			observable
			$
			\mathcal O(\bm h)=\langle \bm c,\bm h\rangle
			$
			such that
			\[
			|\widetilde{\mathcal O}-\mathcal O(\bm h)|\le \epsilon .
			\]
			Furthermore, in the preconditioned Poisson setting of \cite{JLMY25}, the query
			complexity for estimating \(\mathcal O(\bm h)\) is
			\[
			O\!\bigl(\operatorname{poly}(d)\polylog(N_h)\,\epsilon^{-1}
			\,\operatorname{polylog}(\epsilon^{-1})\bigr).
			\]
		\end{theorem}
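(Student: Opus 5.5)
The plan is to bound the total error by three contributions and the cost by the product of three factors, following the pipeline of Section~\ref{sec:schrodingerization}. The errors are the relaxation (steady-state) error, the discretization and truncation error of the Schr\"odingerized system, and the statistical error of the overlap estimate. The cost factors are the cost of one run of $e^{-iHT}$, the inverse success probability of the recovery projector $\Pi_{\mathrm{rec}}$, and the number of repetitions needed by amplitude estimation. Throughout, the observable is taken in the form
\[
\mathcal O(\bm h)=\|S^\top\bm c\|\,\|\bm z(T)\|\,\langle c_S|z(T)\rangle ,
\]
so it suffices to estimate the overlap to accuracy $\epsilon'=\epsilon/(\|S^\top\bm c\|\,\|\bm z(T)\|)$. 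The norm prefactor is treated as $O(1)$ in the normalized small-output setting, or is simply absorbed into $\epsilon$.

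\emph{Step 1 (relaxation time).} Since $K_S$ is symmetric with $\lambda_{\min}(K_S)=\Theta(1)$, we have
\[
\bm z(t)-K_S^{-1}\bm b_S=e^{-K_St}\bigl(\bm z_0-K_S^{-1}\bm b_S\bigr).
\]
Hence $T=\Theta(\log(1/\epsilon))$ already gives relative error $\epsilon/3$. This error transfers to $\bm h=S\bm z$, because $S K_S^{-1} S^\top\bm b=K^{-1}\bm b$.

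\emph{Step 2 (Schr\"odingerization).} I would invoke the recovery result of \cite{JLMPY2026}. For $p\ge p_3$ the warped-phase solution satisfies $e^{p}\bm w(T,p)=\bm z_f(T)$ exactly. The discrete Fourier scheme on $[-R,R]$ with $\psi(p)=e^{-|p|}$ has error controlled by $R$ and $N_p$, with $R=O(\lambda_{\max}(H_1)T+\log(1/\epsilon))$ and $N_p=\mathrm{poly}\log(1/\epsilon)$. The needed spectral bound is $\lambda_{\max}(H_1)=O(1+1/T)$, which follows from $\lambda_{\max}(K_S)=\Theta(1)$ and $\|I/T\|=1/T$. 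The Hamiltonian $H=D_p\otimes H_1-I\otimes H_2$ has norm $O(\|D_p\|\,\alpha_{K_S})=O(N_p\,\alpha_{K_S}/R)$. I would build its block encoding from that of $K_S$, using a linear combination of unitaries with a controlled diagonal $D_p$. Standard QSP/QSVT Hamiltonian simulation then costs
\[
O\bigl(\alpha_H T+\log(1/\epsilon)\bigr)=O\bigl(\mathrm{poly}(d)\log(N_h)\,\mathrm{polylog}(1/\epsilon)\bigr)
\]
queries.

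\emph{Step 3 (postselection and estimation).} The success probability of $\Pi_{\mathrm{rec}}$ is
\[
\frac{\sum_{k\in\mathcal I_3}e^{-2p_k}\,\|\bm z_f(T)\|^2}{\|\bm W_h(0)\|^2}.
\]
Since $p_k=O(1)$ on $\mathcal I_3$ and the $\bm b_S$ block is scaled by $T$, this probability is bounded below by $\Omega(\|\bm z_f(T)\|^2/\|\bm z_f(0)\|^2)$. That ratio is $\Omega(1)$ under the well-conditioning hypotheses, possibly up to a factor of order $T^2=\mathrm{polylog}(1/\epsilon)$. Using amplitude amplification together with amplitude estimation (Hadamard test inside amplitude estimation) of $\langle c_S|z(T)\rangle$ to accuracy $\epsilon/3$ requires $O(\epsilon^{-1})$ calls to the whole circuit, including $O_{\mathrm{prep}}$, $O_{\mathrm{meas}}$ and $O_H(T)$. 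Multiplying this by the cost from Step 2 gives the stated bound.

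\emph{Specialization and main obstacle.} In the BPX-preconditioned Poisson setting of \cite{JLMY25}, the conditions $\lambda(K_S)=\Theta(1)$ and $\alpha_{K_S}=O(\mathrm{poly}(d)\log N_h)$ are exactly the results proved there, so the complexity specializes as claimed. I expect the main obstacle to be Step 3: showing that the norm ratios $\|\bm z_f(T)\|/\|\bm W_h(0)\|$ and $\|S^\top\bm c\|\,\|\bm z(T)\|$ do not introduce hidden $N_h$-dependence. I would first try to control them using the $O(1)$ conditioning of $K_S$, which gives $\|\bm z(T)\|\asymp\|\bm b_S\|$. Failing that, I would state the result with accuracy measured relative to these normalizations, which is the normalized small-output regime the theorem is implicitly working in.
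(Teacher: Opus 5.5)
The paper gives no separate proof of this theorem. It rests on the construction of Section~\ref{sec:schrodingerization}, on \cite{JLMY25} for the BPX block-encoding and spectral bounds, and on \cite{JLMPY2026} for the choice of $\psi$. Your outline follows that route, and Steps~1 and~3 are sound: with $\bm z_0=0$, postselection and extraction of the first block cost only a factor $O(T)$ in amplitude.

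\textbf{The gap is in Step~2.} With $\psi(p)=e^{-|p|}$, the claim $N_p=\mathrm{polylog}(1/\epsilon)$ is false.
\begin{itemize}
\item This $\psi$ has a kink at $p=0$, so $\widehat\psi(\xi)=2/(1+\xi^2)$ decays only like $\xi^{-2}$.
\item Hence the Fourier discretization in $p$ is only algebraically accurate in $\Delta p$, also in the recovery region $p\ge p_3$. The unitary evolution does not damp the interpolation error of the initial profile.
\item Reaching accuracy $\epsilon$ therefore forces $\max_\ell|\mu_\ell|=\pi N_p/(2R)$ to grow polynomially in $1/\epsilon$.
\item The normalization of $H=D_p\otimes H_1-I\otimes H_2$ scales like $\max_\ell|\mu_\ell|\,\|H_1\|$ and enters the Hamiltonian-simulation cost linearly. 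So each run costs $\mathrm{poly}(1/\epsilon)$, and the total is $\epsilon^{-1-\beta}$ for some $\beta>0$, not $\epsilon^{-1}\mathrm{polylog}(\epsilon^{-1})$.
\end{itemize}
The missing ingredient is what the paper delegates to \cite{JLMPY2026}. Use a smooth (e.g.\ Gevrey-class) $\psi$ with $\psi(p)=e^{-p}$ for $p\ge p^\diamond$, and recover from $p_k\ge p^\diamond+\lambda^+_{\max}(H_1)T$. Then $\widehat\psi$ decays (sub)exponentially, $\max_\ell|\mu_\ell|=\mathrm{polylog}(1/\epsilon)$ suffices, and your per-run count $O(\alpha_HT+\log(1/\epsilon))$ becomes valid.

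\textbf{The norm prefactor.} You flag it, but it cannot simply be treated as $O(1)$ or absorbed into $\epsilon$. The theorem asserts an absolute error $\epsilon$ at $\mathrm{polylog}(N_h)$ cost, while your argument controls the error only in units of $\|S^\top\bm c\|\,\|\bm z\|$. The paper's statement makes the same tacit assumption. In the BPX setting you can bound this factor instead of assuming it:
\begin{itemize}
\item If the spectrum of $K_S$ lies in $[c_0,C_0]$, then $c_0K^{-1}\preceq B\preceq C_0K^{-1}$ and $\|\bm z\|^2\le c_0^{-1}\bm z^\top K_S\bm z=c_0^{-1}\bm h^\top K\bm h$.
\item Hence $\|S^\top\bm c\|^2\|\bm z\|^2\le (C_0/c_0)\,(\bm c^\top K^{-1}\bm c)(\bm h^\top K\bm h)$, a scale-invariant bound.
\item In the finite-element normalization, $\bm h^\top K\bm h=O(1)$ for fixed smooth boundary data, and $\bm c^\top K^{-1}\bm c=O(\log N_h)$ for a nodal point value in two dimensions. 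So the prefactor is polylogarithmic for point values.
\end{itemize}
For a general $\bm c$ the bound fails. A raw difference stencil for $\nabla h_{\bm a}(\bm a_j)$ gives $\bm c^\top K^{-1}\bm c$ of the order of the inverse squared mesh size, i.e.\ of order $N_h$. So the $\mathrm{polylog}(N_h)$ claim needs an explicit hypothesis of this kind. It is met, for example, by point values or by mollified (mean-value) functionals of the harmonic field.
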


		\section{A hybrid quantum-classical formulation}
		\label{sec:hybrid quantum-classical formulation}
		We now turn to the numerical treatment of the reduced model derived in
		Section~\ref{sec:reduction}. The key observation is that the reduced dynamics
		consists of two components of very different nature: a low-dimensional vortex
		system and a linear harmonic correction problem associated with the current
		vortex configuration. This naturally leads to a hybrid strategy: the vortex
		trajectories are advanced classically, whereas the harmonic correction is
		computed through a large linear system and treated by a quantum linear solver.
		
		\subsection{Classical time-marching for the vortex dynamics}
		\label{sec:classical time-stepping}
		
		We begin with the classical update of the vortex locations. As discussed in
		Section~\ref{sec:motion law}, the full bounded-domain reduced model is
		\begin{equation}\label{eq:M2-model-sec3}
			\dot{\bm a}_j
			=
			-2\left(
			\sum_{k\neq j}\frac{(\bm a_j-\bm a_k)^\perp}{|\bm a_j-\bm a_k|^2}
			+
			\nabla h_{\bm a}(\bm a_j)
			\right),
			\qquad j=1,\dots,M.
		\end{equation}
		Thus, in the fully coupled dynamics, the vortex update depends on the regular
		correction through the local values \(\nabla h_{\bm a}(\bm a_j)\).
		
		In the present work, however, our primary operating mode is the simplified
		free-space point-vortex model
		\begin{equation}\label{eq:M1-model-sec3}
			\dot{\bm a}_j
			=
			-2\sum_{k\neq j}\frac{(\bm a_j-\bm a_k)^\perp}{|\bm a_j-\bm a_k|^2},
			\qquad j=1,\dots,M,
		\end{equation}
		which is exactly the approximation introduced in
		Section~\ref{sec:motion law} when the boundary contribution is negligible over
		the time interval of interest. Under this approximation, the vortex motion is
		completely decoupled from the harmonic correction problem.
		
		Let \(0=t_0<t_1<\cdots<t_{N_T}=T\) be a uniform partition of the time interval
		with time step \(\Delta t=T/N_T\). Denoting by \(\bm a_j^m\) the numerical
		approximation of \(\bm a_j(t_m)\), a first-order explicit discretization of
		\eqref{eq:M1-model-sec3} is
		\begin{equation}\label{eq:classical-vortex-update}
			\bm a_j^{m+1}
			=
			\bm a_j^m
			-
			2\Delta t
			\sum_{k\neq j}
			\frac{(\bm a_j^m-\bm a_k^m)^\perp}{|\bm a_j^m-\bm a_k^m|^2},
			\qquad j=1,\dots,M.
		\end{equation}
		This yields a fully classical evolution of the vortex trajectories
		\[
		\bm a^m=(\bm a_1^m,\dots,\bm a_M^m),
		\qquad m=0,1,\dots,N_T.
		\]
		
		If one instead adopts the fully coupled model
		\eqref{eq:M2-model-sec3}, then the corresponding first-order explicit update is
		\begin{equation}\label{eq:classical-vortex-update-coupled}
			\bm a_j^{m+1}
			=
			\bm a_j^m
			-
			2\Delta t
			\left(
			\sum_{k\neq j}
			\frac{(\bm a_j^m-\bm a_k^m)^\perp}{|\bm a_j^m-\bm a_k^m|^2}
			+
			\nabla h_{\bm a^m}(\bm a_j^m)
			\right),
			\qquad j=1,\dots,M.
		\end{equation}
		Hence even in the coupled model, the classical update does not require the full
		field \(h_{\bm a^m}\), but only its gradients at the current vortex locations.
		
		Since the reduced free-space point-vortex system \eqref{eq:M1-model-sec3} is
		Hamiltonian, one may alternatively use a structure-preserving time integrator.
		A natural second-order symplectic choice is the implicit midpoint rule
		\begin{equation}\label{eq:classical-vortex-update-implicit-midpoint}
			\frac{\bm a_j^{m+1}-\bm a_j^m}{\Delta t}
			=
			-2\sum_{k\neq j}
			\frac{\left(\frac{\bm a_j^{m+1}+\bm a_j^m}{2}
				-\frac{\bm a_k^{m+1}+\bm a_k^m}{2}\right)^\perp}
			{\left|\frac{\bm a_j^{m+1}+\bm a_j^m}{2}
				-\frac{\bm a_k^{m+1}+\bm a_k^m}{2}\right|^2},
			\qquad j=1,\dots,M.
		\end{equation}
		Compared with the explicit scheme \eqref{eq:classical-vortex-update}, this
		midpoint discretization better respects the Hamiltonian structure and is
		preferable for long-time simulations. In the present paper, however, we use the
		explicit first-order scheme \eqref{eq:classical-vortex-update} for simplicity,
		and regard \eqref{eq:classical-vortex-update-coupled} as the corresponding
		first-order hybrid extension.

		\subsection{Quantum treatment of the harmonic correction problem}
		\label{subsec: quantum treatment}
		
		We now describe the quantum part of the hybrid framework. Once the vortex
		locations have been updated classically, the remaining task is to compute the
		harmonic correction associated with the current vortex configuration. The key
		point is that, for our purposes, the quantum routine is not used to reconstruct
		the full field \(h_{\bm a^m}\), but only to estimate a small number of
		quantities derived from it.
		
		\subsubsection{The harmonic correction problem and quantities of interest}
		
		Given the vortex configuration
		\[
		\bm a^m=(\bm a_1^m,\dots,\bm a_M^m)
		\]
		at time \(t_m\), the corresponding harmonic correction \(h_{\bm a^m}\) is
		determined by the boundary-value problem
		\begin{equation}\label{eq:ha-Dirichlet-sec3}
			\left\{
			\begin{aligned}
				\Delta h_{\bm a^m} &= 0
				\qquad &&\text{in }\Omega,\\
				h_{\bm a^m} &= \phi_g-\Theta_{\bm a^m}
				\qquad &&\text{on }\partial\Omega,
			\end{aligned}
			\right.
		\end{equation}
		where \(\Theta_{\bm a^m}\) is the singular phase associated with the current
		vortex locations and \(\phi_g\) is a boundary phase lifting of \(g\).
		
		The quantities extracted from \(h_{\bm a^m}\) serve two distinct purposes in the
		hybrid framework. First, in the coupled reduced model
		\eqref{eq:M2-model-sec3}, the classical update requires only the local feedback
		values
		\begin{equation}\label{eq:feedback-observables}
			\mathcal O_j^{m,\mathrm{fb}}
			:=
			\nabla h_{\bm a^m}(\bm a_j^m),
			\qquad j=1,\dots,M.
		\end{equation}
		Second, at selected time steps---in particular at the final time---one may wish
		to recover physical observables associated with the outer approximation
		\[
		u_{\bm a^m}(\bm x)
		=
		\exp\!\bigl(i(\Theta_{\bm a^m}(\bm x)+h_{\bm a^m}(\bm x))\bigr)
		\]
		and the corresponding effective velocity field
		\[
		\bm v_{\bm a^m}(\bm x)
		=
		\nabla\Theta_{\bm a^m}(\bm x)+\nabla h_{\bm a^m}(\bm x).
		\]
		For this purpose, the relevant quantities include point values
		\(h_{\bm a^m}(\bm x_\ell)\) and gradients
		\(\nabla h_{\bm a^m}(\bm x_\ell)\) at a small number of probe locations
		\(\bm x_\ell\), \(\ell=1,\dots,L_{\mathrm{obs}}\). Once these are available, the corresponding
		values of \(u_{\bm a^m}(\bm x_\ell)\) and \(\bm v_{\bm a^m}(\bm x_\ell)\) are
		recovered by straightforward classical post-processing. In both uses, the key
		observation is that one only needs a small number of local values or gradients,
		rather than the full field \(h_{\bm a^m}\).
		
		To make this small-output structure accessible to quantum linear-algebraic
		subroutines, we next turn to the spatial discretization of the harmonic
		correction problem.

		\subsubsection{Spatial discretization and discrete observables}
		
		After spatial discretization, \eqref{eq:ha-Dirichlet-sec3} gives rise to a
		linear system
		\begin{equation}\label{eq:discrete-harmonic-system}
			K\,\bm h_a^m=\bm b_a^m,
		\end{equation}
		where \(K\in\mathbb R^{N_h\times N_h}\) is the discrete Laplace operator,
		\(\bm h_a^m\in\mathbb R^{N_h}\) is the coefficient vector representing
		\(h_{\bm a^m}\), and \(\bm b_a^m\in\mathbb R^{N_h}\) is induced by the boundary
		data \(\phi_g-\Theta_{\bm a^m}\). Here \(N_h\) denotes the number of spatial
		degrees of freedom.
		
		A crucial feature is that the matrix \(K\) is independent of the vortex
		configuration, whereas all dependence on \(\bm a^m\) enters through the
		right-hand side \(\bm b_a^m\). Thus, once the spatial discretization is fixed,
		the large-scale linear operator remains unchanged throughout the time evolution,
		and only the input vector varies from one time step to the next.
		
		The observables described above become linear functionals of \(\bm h_a^m\). More
		precisely, for each observable of interest, there exists a vector
		\(\bm c_\ell\in\mathbb R^{N_h}\) such that
		\begin{equation}\label{eq:observable-functional}
			\mathcal O_\ell(\bm h_a^m)=\langle \bm c_\ell,\bm h_a^m\rangle,
			\qquad \ell=1,\dots,L_{\mathrm{obs}}.
		\end{equation}
		Collecting them together, we write
		\begin{equation}\label{eq:observable-matrix}
			\mathcal O(\bm h_a^m)=C\,\bm h_a^m,
		\end{equation}
		where \(C\in\mathbb R^{L_{\mathrm{obs}}\times N_h}\) is an observation matrix. In particular,
		the feedback observables \eqref{eq:feedback-observables} and the final probe
		values used to reconstruct \(u_{\bm a^m}\) and \(\bm v_{\bm a^m}\) are all of
		this form.
		
		With this discrete formulation in hand, we can now describe the quantum solver
		used to estimate these observables efficiently.
		
		\subsubsection{Quantum preconditioned solver}
		
		We now apply the Schr\"odingerization framework reviewed in
		Section~\ref{sec:schrodingerization} to the harmonic correction problem.
		Rather than solving \eqref{eq:discrete-harmonic-system} directly, we introduce
		the BPX preconditioner
		\[
		B=SS^\top
		\]
		for the Poisson operator \cite{BPX1990}, and rewrite the system in symmetrically
		preconditioned form as
		\begin{equation}\label{eq:preconditioned-harmonic-system}
			K_S\,\bm z_a^m=\bm b_{S,a}^m,
			\qquad
			K_S:=S^\top K S,
			\qquad
			\bm b_{S,a}^m:=S^\top \bm b_a^m,
		\end{equation}
		so that
		\begin{equation}\label{eq:recover-h-from-z}
			\bm h_a^m=S\,\bm z_a^m.
		\end{equation}
		This brings the linear correction problem exactly into the form required by the
		Schr\"odingerization-based quantum algorithm.
		
		A key point is that the quantum routine acts on the symmetrically preconditioned
		operator \(K_S\), rather than on the original matrix \(K\). As proved in
		\cite{JLMY25}, the matrix \(K_S\) admits a structure-aware exact block-encoding
		with normalization
		\begin{equation}\label{eq:encoding para Ks}
			O\!\bigl(d^2 \log(N_h)\bigr),
		\end{equation}
		where $d$ is the dimension.
		
		As explained in Section~\ref{sec:schrodingerization}, the system
		\eqref{eq:preconditioned-harmonic-system} is first embedded into an augmented
		ODE, then transformed by the warped phase representation into a Schr\"odinger-type
		equation in one higher dimension, and finally implemented by Hamiltonian
		simulation after truncation and spectral discretization in the auxiliary
		variable. Since our goal is not to reconstruct the full vector \(\bm h_a^m\),
		but only to estimate a small number of observables, we combine this evolution
		with the linear-measurement strategy reviewed earlier. Indeed, for any observable
		of the form
		\[
		\mathcal O_\ell(\bm h_a^m)=\langle \bm c_\ell,\bm h_a^m\rangle,
		\]
		one has
		\begin{equation}\label{eq:preconditioned-observable}
			\mathcal O_\ell(\bm h_a^m)
			=
			\langle \bm c_\ell,S\bm z_a^m\rangle
			=
			\langle S^\top \bm c_\ell,\bm z_a^m\rangle.
		\end{equation}
		Hence the desired quantity is reduced to a linear observable of the
		preconditioned unknown \(\bm z_a^m\). In particular, the feedback terms needed
		in the coupled vortex update, as well as the probe values used for the final
		reconstruction of \(u_{\bm a^m}\) and \(\bm v_{\bm a^m}\), all fit into this
		same measurement framework.
		
		Accordingly, the quantum part of the algorithm consists of four steps: preparing
		the input state associated with \(\bm b_{S,a}^m\), simulating the
		Schr\"odingerized Hamiltonian generated by \(K_S\), recovering the steady-state
		component corresponding to \(\bm z_a^m\), and finally estimating the target
		observable in \eqref{eq:preconditioned-observable} by measuring against the
		state prepared from \(S^\top\bm c_\ell\). In this way, quantum resources are
		used only for the high-dimensional linear harmonic correction problem, while the
		low-dimensional vortex dynamics remains classical.

	\subsection{Hybrid algorithmic workflow}
	\label{subsec:hybrid-algorithm}
	Two operating modes arise naturally. In the simplified model \emph{M1}, the vortex
dynamics is decoupled from the harmonic correction, so the vortex trajectory can
be computed entirely classically, and the quantum routine is invoked only at
selected output times. In the coupled model \emph{M2}, the vortex update depends on
the feedback values \(\nabla h_{\bm a^m}(\bm a_j^m)\), so the classical and
quantum parts must interact at each time step. In the numerical experiments
below, we will compare \emph{M1} and \emph{M2} to quantify the effect of the
boundary-induced harmonic correction on the reduced dynamics and outer
reconstruction. These two modes are summarized in
Table~\ref{tab:hybrid-modes}.

		\begin{table}[htbp]
		\centering
\begin{tabular}{p{0.18\textwidth} p{0.32\textwidth} p{0.32\textwidth}}
	\hline
	& \textbf{M1 (decoupled mode)} & \textbf{M2 (coupled mode)}  \\
	\hline
	Role of \(h_{\bm a}\) &
	Reconstruction only &
	Feedback + reconstruction \\[0.3em]
	Vortex update &
	Classical only &
	Classical with quantum feedback \\[0.3em]
	Quantum call &
	Only at selected output times &
	At each time step \\[0.3em]
	When needed &
	When boundary feedback is negligible &
	When boundary feedback is non-negligible \\
	\hline
\end{tabular}
			\caption{Two operating modes of the hybrid framework.}
			\label{tab:hybrid-modes}
		\end{table}

	For clarity, we summarize the two workflows in pseudocode form shown in Algorithm \ref{alg:hybrid-m1} and \ref{alg:hybrid-m2}, respectively.
	
	\begin{algorithm}[htbp]
		\caption{Hybrid solver in the decoupled mode \emph{M1}}
		\label{alg:hybrid-m1}
		\begin{algorithmic}[1]
			\State \textbf{Input:} \(\Omega\), \(\phi_g\), \(\bm a^0\), \(T\), \(\Delta t\), discretization data, observation vectors \(\{\bm c_\ell\}\)
            \State Set up the discrete operator \(K\), specify the BPX factor \(S\), and prepare quantum access to \(K_S=S^\top K S\)
			\For{$m=0,1,\dots,N_T-1$}
			\State Update \(\bm a^{m+1}\) by the classical free-space rule \eqref{eq:classical-vortex-update}
			\EndFor
			\State Choose an output time \(t_m\) (for example \(t_{N_T}=T\))
			\State Build \(\Theta_{\bm a^m}\), the boundary data \(\phi_g-\Theta_{\bm a^m}\), and the right-hand side \(\bm b_a^m\)
			\State Form the preconditioned input \(\bm b_{S,a}^m=S^\top \bm b_a^m\)
			\State Quantum routine: estimate the desired observables
			\[
			\mathcal O_\ell(\bm h_a^m)=\langle S^\top \bm c_\ell,\bm z_a^m\rangle,
			\qquad
			K_S\bm z_a^m=\bm b_{S,a}^m
			\]
			\State \textbf{Output:} trajectory \(\{\bm a^m\}_{m=0}^{N_T}\) and selected observables of \(u_{\bm a^m}\), \(\bm v_{\bm a^m}\)
		\end{algorithmic}
	\end{algorithm}
	
	\begin{algorithm}[htbp]
		\caption{Hybrid solver in the coupled mode \emph{M2}}
		\label{alg:hybrid-m2}
		\begin{algorithmic}[1]
			\State \textbf{Input:} \(\Omega\), \(\phi_g\), \(\bm a^0\), \(T\), \(\Delta t\), discretization data, feedback/observation vectors \(\{\bm c_\ell\}\)
            \State Set up the discrete operator \(K\), specify the BPX factor \(S\), and prepare quantum access to \(K_S=S^\top K S\)
			\For{$m=0,1,\dots,N_T-1$}
			\State Build \(\Theta_{\bm a^m}\), the boundary data \(\phi_g-\Theta_{\bm a^m}\), and the right-hand side \(\bm b_a^m\)
			\State Form the preconditioned input \(\bm b_{S,a}^m=S^\top \bm b_a^m\)
			\State Quantum routine: estimate the feedback values
			\[
			\nabla h_{\bm a^m}(\bm a_j^m),
			\qquad j=1,\dots,M,
			\]
			via observables of the form
			\[
			\mathcal O_\ell(\bm h_a^m)=\langle S^\top \bm c_\ell,\bm z_a^m\rangle,
			\qquad
			K_S\bm z_a^m=\bm b_{S,a}^m
			\]
			\State Update \(\bm a^{m+1}\) by the coupled rule \eqref{eq:classical-vortex-update-coupled}
			\EndFor
			\State \textbf{Output:} coupled trajectory \(\{\bm a^m\}_{m=0}^{N_T}\) and selected observables of \(u_{\bm a^m}\), \(\bm v_{\bm a^m}\)
		\end{algorithmic}
	\end{algorithm}
	
	In both modes, the quantum subroutine is used only to estimate a small number of
	observables associated with the harmonic correction, rather than to reconstruct
	the full field \(h_{\bm a^m}\) on the whole mesh. Thus the amount of
	information exchanged between the classical and quantum components remains
	small. In particular, even in the coupled mode \emph{M2}, the online communication is
	limited to the feedback quantities needed for the vortex update, which preserves
	the small-output character of the overall framework.

		\section{Ginzburg–Landau vortex dynamics and filamentary extensions}
		\label{sec:GL and 3D extensions}
        
		In this section we extend the idea of Section~\ref{sec:reduction}
		to {\it dissipative} equation, the nonlinear heat or Ginzburg-Landau equation, which gives rise to  Ginzburg--Landau vortex dynamics and to filamentary vortex structures in
		three dimensions. The main point is that, although the reduced geometric objects
		and their motion laws become different, the same basic mechanism persists: the
		nonlinear singular structure is described by a low-dimensional vortex dynamics,
		whereas the remaining outer field is governed by a linear equation. We first
		discuss the  Ginzburg--Landau model in two dimensions, and then turn
		to the three-dimensional superconductivity setting, where point vortices are
		replaced by vortex filaments.

		\subsection{Vortex motion in Ginzburg--Landau models}
		\label{subsec:vortex in Ginzburg--Landau}

		Throughout this subsection, we adopt the same geometric setting, the same-sign
		unit-vortex regime, and the corresponding vortex-regime assumptions as in
		Section~\ref{sec:reduction}. In particular, the reduced variables are again the
		vortex locations
		\[
		\bm a(t)=(\bm a_1(t),\dots,\bm a_M(t))\subset\Omega.
		\]
		
		Within the unified framework for complex scalar field equations
		\cite{LinXin1998}, we consider the logarithmically rescaled nonlinear heat
		equation
		\begin{equation}\label{eq:GL_heat}
			\frac{1}{\log(1/\varepsilon)}\,\partial_t u^\varepsilon
			=
			\Delta u^\varepsilon
			+\frac{1}{\varepsilon^2}(1-|u^\varepsilon|^2)u^\varepsilon,
			\qquad (\bm x,t)\in\Omega\times(0,T).
		\end{equation}
		As in the nonlinear Schr\"odinger case, in the vortex regime one has
		\(|u^\varepsilon|\approx 1\) away from the vortex cores. Hence the leading
		outer behavior is again described by a singular vortex phase together with a
		regular correction, namely
		\begin{equation}\label{eq:GL_outer_ansatz}
			u^\varepsilon(\bm x,t)\approx
			e^{i(\Theta_{\bm a(t)}(\bm x)+h_{\bm a(t)}(\bm x))},
		\end{equation}
		where
		$
		\Theta_{\bm a}(\bm x)=\sum_{j=1}^M \arg(\bm x-\bm a_j).
		$
		Thus, at the level of outer structure, the same nonlinear-to-linear
		decomposition persists: the singular part is explicitly determined by the vortex
		locations, whereas the regular part is encoded by \(h_{\bm a}\).
		
		The essential difference from the nonlinear Schr\"odinger case lies in the
		effective vortex motion law. In the nonlinear heat / Ginzburg--Landau setting,
		the vortex locations evolve according to the gradient-flow law
		\begin{equation}\label{eq:GL_gradient_law}
			\dot{\bm a}_j(t)=-\nabla_{\bm a_j}W(\bm a(t)),
			\qquad j=1,\dots,M,
		\end{equation}
		where \(W(\bm a)\) is the renormalized energy introduced in
		\eqref{eq:renormalized-energy}. Thus the reduced vortex degrees of freedom
		remain the same, but the effective dynamics changes from the Hamiltonian law
		\(J\nabla_{\bm a_j}W\) in the NLS case to the dissipative gradient flow
		\(-\nabla_{\bm a_j}W\) in the nonlinear heat / Ginzburg--Landau case.
		
		For bounded domains, the boundary contribution is again encoded by the same
		regular correction \(h_{\bm a}\) as in
		Section~\ref{sec:harmonic correction}. Accordingly, keeping the same
		normalization convention as in \eqref{eq:M2-model}, the gradient-flow law
		\eqref{eq:GL_gradient_law} takes the explicit form
		\begin{equation}\label{eq:GL_bounded_motion}
			\dot{\bm a}_j(t)
			=
			2\sum_{k\ne j}\frac{\bm a_j-\bm a_k}{|\bm a_j-\bm a_k|^2}
			-
			2\nabla^\perp h_{\bm a}(\bm a_j),
			\qquad j=1,\dots,M,
		\end{equation}
		where \(\nabla^\perp f:=(-\partial_{x_2}f,\partial_{x_1}f)\). In particular,
		the vortex dynamics consists of an explicit pairwise interaction together with a
		boundary-induced drift encoded by the regular correction.
		
		When the boundary effect is negligible over the time interval of interest, one
		may further adopt the free-space approximation
		\begin{equation}\label{eq:GL_plane_motion}
			\dot{\bm a}_j(t)
			=
			2\sum_{k\ne j}\frac{\bm a_j-\bm a_k}{|\bm a_j-\bm a_k|^2},
			\qquad j=1,\dots,M.
		\end{equation}
		This free-space law should be understood as an auxiliary approximation, rather
		than the rigorous bounded-domain motion law.

		For the present hybrid framework, the low-dimensional system
		\eqref{eq:GL_gradient_law} is still evolved classically, while the
		quantum-relevant part remains the regular correction associated with each frozen
		vortex configuration. In the reduced description adopted here, \(h_{\bm a}\) is
		therefore treated as the same harmonic correction as in
		Section~\ref{sec:harmonic correction}; more precisely, for Dirichlet boundary
		data \(g=e^{i\phi_g}\), it is recovered from the elliptic problem
		\begin{equation}\label{eq:GL_ha_elliptic}
			\Delta h_{\bm a}=0
			\qquad \text{in }\Omega,\qquad
			h_{\bm a}=\phi_g-\Theta_{\bm a}
			\qquad \text{on }\partial\Omega,
		\end{equation}
		modulo the natural \(2\pi\)-ambiguity of the phase. Thus, at each time step,
		the vortex locations are advanced by the gradient flow
		\eqref{eq:GL_gradient_law}, while the regular field is obtained by solving the
		linear elliptic problem \eqref{eq:GL_ha_elliptic}. We emphasize that this is a
		quasi-static reduced description on the logarithmically accelerated
		vortex-motion scale; on the usual diffusive scale, one may instead formally
		derive \(\log(1/\varepsilon)^{-1}\partial_t \phi=\Delta \phi\) for the outer
		phase.

		We also note that the Ginzburg--Landau perspective is not restricted to planar
		point vortices. In particular, \cite{E1994} further treats superconductivity
		models and derives leading-order laws for vortex lines in three dimensions.
		This provides the natural bridge to the filamentary setting considered next.

		\subsection{From point vortices to vortex filaments in three-dimensional superconductivity}
		To pass from two-dimensional point vortices to three-dimensional filamentary
		structures while retaining an explicit linear outer problem, we consider the
		three-dimensional time-dependent Ginzburg--Landau model for superconductivity
		studied in \cite{E1994}. The primitive unknowns are the complex order parameter
		\(\psi\), the magnetic vector potential \(A\), and the electric potential
		\(\Phi\), with induced magnetic field
		\[
		H=\nabla\times A.
		\]
		Since the system is gauge-invariant, we work in the London gauge
		\begin{equation}\label{eq:London-gauge}
			\nabla\cdot A=0.
		\end{equation}
		In nondimensional variables,  the model takes the form
		\begin{equation}\label{eq:TDGL-3D}
			\left\{
			\begin{aligned}
				\gamma\bigl(\partial_t \psi+i\Phi\psi\bigr)
				&=
				-\bigl(-i\varepsilon\nabla-A\bigr)^2\psi
				+(1-|\psi|^2)\psi,\\
				\sigma\bigl(\partial_t A+\nabla\Phi\bigr)
				&=
				-\nabla\times(\nabla\times A)
				+\frac{1}{2i}\bigl(\overline{\psi}\,\varepsilon\nabla\psi
				-\psi\,\varepsilon\nabla\overline{\psi}\bigr)
				-|\psi|^2A,\\
				\nabla\cdot A&=0.
			\end{aligned}
			\right.
		\end{equation}
		Here \(\varepsilon\) characterizes the vortex-core scale, \(\gamma\) is a
		relaxation parameter, and \(\sigma\) is the normal-state conductivity.
		Compared with the scalar no-magnetic-field model in
		Section~\ref{subsec:vortex in Ginzburg--Landau}, \eqref{eq:TDGL-3D} is
		substantially more involved. Nevertheless, in the vortex regime relevant here it
		still admits an asymptotically reduced description, consisting of a geometric
		evolution law for the vortex filaments together with a linear outer problem for
		the magnetic field. This makes the model a natural three-dimensional extension
		of the reduction principle developed above.
		
		In three dimensions, the singular set is an evolving vortex filament
		\[
		\Gamma(t)=\{\bm X(s,t):s\in I\}\subset\Omega\subset\mathbb R^3,
		\]
		or, more generally, a finite union of such curves. Here \(\bm X(s,t)\) is a
		parametrization of the filament and \(s\) is an arclength-type parameter. This
		filament plays the role that the point-vortex locations \(\bm a_j(t)\) played in
		two dimensions: it is the reduced geometric object carrying the singular
		topological structure of the solution.
		
		The matched asymptotic reduction of \eqref{eq:TDGL-3D} yields two leading-order
		objects: a geometric evolution law for the filament itself, and a linear outer
		problem for the magnetic field. At leading order, \cite{E1994} showed that vortex
		filaments move by curvature flow in the normal direction,
		\begin{equation}\label{eq:filament_curvature_flow}
			\bm V=\kappa \bm n,
		\end{equation}
		where \(\bm V\) is the filament velocity, \(\kappa\) is the curvature, and
		\(\bm n\) is the unit normal.
		
		Once the filament geometry is prescribed, the remaining high-dimensional
		component is the outer magnetic field. This linear part arises from the outer
		expansion of the full superconductivity system. In the asymptotic regime
		considered in \cite{E1994}, the magnetic flux carried by each vortex is of order
		\(O(\varepsilon)\). Consequently, in the outer region the electromagnetic
		variables are expanded starting at order \(\varepsilon\), while the order
		parameter and the electric potential admit the expansions
		\[
		\psi=\psi_0+\varepsilon\psi_1+\varepsilon^2\psi_2+\cdots,\qquad
		A=\varepsilon A_1+\varepsilon^2A_2+\cdots,\qquad
		\Phi=\Phi_0+\varepsilon\Phi_1+\varepsilon^2\Phi_2+\cdots.
		\]
		Accordingly,
		\[
		H=\nabla\times A=\varepsilon H_1+\varepsilon^2H_2+\cdots,
		\qquad H_1:=\nabla\times A_1,
		\]
		so that \(H_1\) is the first nontrivial outer magnetic-field component.
		
		Formally, substituting these expansions into the gauge-fixed system
		\eqref{eq:TDGL-3D} and collecting terms of the same order, the \(O(1)\) part of
		the \(\psi\)-equation yields
		\[
		\gamma\bigl(\partial_t\psi_0+i\Phi_0\psi_0\bigr)
		=
		(1-|\psi_0|^2)\psi_0.
		\]
		In the outer region the leading state lies on the superconducting manifold, so
		that \(|\psi_0|=1\), and hence
		\[
		\partial_t\psi_0+i\Phi_0\psi_0=0.
		\]
		In particular, as in the outer analysis of \cite{E1994}, one may write
		\[
		\psi_0=e^{i\theta_0},\qquad \Phi_0=0.
		\]
		
		Next, collecting the \(O(\varepsilon)\) terms in the \(A\)-equation gives
		\[
		\sigma(\partial_t A_1+\nabla\Phi_1)
		=
		-\nabla\times(\nabla\times A_1)
		+\frac{1}{2i}\bigl(\overline{\psi_0}\nabla\psi_0
		-\psi_0\nabla\overline{\psi_0}\bigr)
		-|\psi_0|^2A_1.
		\]
		Using \(\psi_0=e^{i\theta_0}\), one has
		\[
		\frac{1}{2i}\bigl(\overline{\psi_0}\nabla\psi_0
		-\psi_0\nabla\overline{\psi_0}\bigr)=\nabla\theta_0,
		\qquad
		|\psi_0|^2=1,
		\]
		and therefore
		\[
		\sigma(\partial_t A_1+\nabla\Phi_1)
		=
		-\nabla\times(\nabla\times A_1)+\nabla\theta_0-A_1.
		\]
		
		Now impose the London gauge at leading order,
		\[
		\nabla\cdot A_1=0.
		\]
		Taking the curl of the previous equation and using
		\[
		\nabla\times\nabla\Phi_1=0,\qquad
		\nabla\times\nabla\theta_0=0
		\]
		away from the vortex core, one obtains
		\[
		\sigma\,\partial_t H_1
		=
		\Delta H_1-H_1,
		\qquad H_1=\nabla\times A_1.
		\]
		Hence, in the outer leading-order quasi-static balance, the magnetic field
		satisfies the homogeneous London equation on the punctured domain,
		\[
		-\Delta H_1+H_1=0.
		\]
		
		In three dimensions, however, the inner vortex core is seen from the outer scale
		as a filament \(\Gamma(t)\). The matching condition at the core is therefore
		encoded by a source term supported on \(\Gamma(t)\), which yields the compact
		London equation
		\begin{equation}\label{eq:London-filament}
			-\Delta H_1 + H_1
			=
			2\pi\int_{\Gamma(t)}
			\delta(\bm x-\bm X(s,t))\,\bm \ell(s,t)\,ds,
		\end{equation}
		where \(\bm \ell(s,t)\) is the unit tangent vector to the filament. Equivalently,
		\begin{equation}\label{eq:London-filament-green}
			H_1(\bm x,t)
			=
			2\pi\int_{\Gamma(t)}
			G(\bm x-\bm X(s,t))\,\bm \ell(s,t)\,ds,
		\end{equation}
		with \(G\) the Green function of \((-\Delta+I)^{-1}\) in \(\mathbb R^3\).
		
		We stress that the quantum target in the present three-dimensional setting is not
		the full gauge-dependent system \((\psi,A,\Phi)\) itself. Rather, after
		asymptotic reduction, the quantities of primary interest are the filament
		geometry and the corresponding leading outer magnetic field. The order parameter
		\(\psi\) still contains the full nonlinear vortex-core structure, while the
		electric potential \(\Phi\) remains part of the gauge-fixed coupled system and is
		not singled out as an explicit linear observable in the reduction of
		\cite{E1994}. By contrast, the leading outer magnetic field \(H_1\) is governed
		by a closed linear equation once the filament geometry is prescribed. For this
		reason, \(H_1\) is the natural high-dimensional component to be handled by the
		quantum routine, whereas the nonlinear geometric evolution of the filament is
		advanced classically.
		
		Hence the three-dimensional superconductivity setting retains the same basic
		splitting as before: a reduced classical evolution for the filament geometry and
		a linear outer field equation for the quantum routine.

		\section{The hybrid quantum-classical algorithms for the Ginzburg-Landau models}
	\label{sec:computational perspectives}
		The computational bottleneck of the reduced algorithm lies in the linear
		correction stage rather than in the low-dimensional vortex update. We therefore
		focus on the mesh-dependent linear solver together with the evaluation of a small
		number of observables.
		
		For the two-dimensional harmonic correction problem, at each time step \(t_m\),
		once the vortex configuration \(\bm a^m\) has been advanced classically, the
		remaining task is the solution of
		\begin{equation}\label{eq:Kx=b}
			K\,\bm h_a^m=\bm b_a^m,
		\end{equation}
		together with the evaluation of
		\[
		\mathcal O_\ell^m=\langle \bm c_\ell,\bm h_a^m\rangle,
		\qquad \ell=1,\dots,L_{\mathrm{obs}},
		\]
		where typically \(L_{\mathrm{obs}}\ll N_h\), and \(N_h\) denotes the number of
		spatial degrees of freedom. On the quantum side, we use the BPX factorization
		\(B=SS^\top\) and the symmetrically preconditioned system
		\[
		K_S:=S^\top K S,
		\qquad
		\bm b_{S,a}^m:=S^\top \bm b_a^m,
		\]
		where the original unknown is recovered from the preconditioned one through
		\[
		\bm h_a^m=S\bm z_a^m.
		\]
		Then
		\[
		\mathcal O_\ell^m
		=
		\langle \bm c_\ell,\bm h_a^m\rangle
		=
		\langle S^\top \bm c_\ell,\bm z_a^m\rangle,
		\]
		so the desired output reduces to a linear observable of the preconditioned
		unknown.
		
		We next consider the three-dimensional time-dependent Ginzburg--Landau model for
		superconductivity in the London regime. At the linear level, the outer magnetic
		field satisfies
		\begin{equation}\label{eq:London-filament}
			-\Delta H_1 + H_1
			=
			2\pi\int_{\Gamma(t)}
			\delta(\bm x-\bm X(s,t))\,\bm \ell(s,t)\,ds.
		\end{equation}
		After spatial discretization on a three-dimensional dyadic grid with mesh size
		\(h\), this gives rise to a linear system of the form
		\begin{equation}\label{eq:discrete-London}
			(K+h^d I)\bm H_1=\bm f_\Gamma,
		\end{equation}
		where \(K\) is the same discrete Laplace operator as in the Poisson case, and
		\(d=3\) in the present setting. We again employ the BPX preconditioner
		\(B=SS^\top\) and consider the symmetrically preconditioned matrix
		\begin{equation}\label{eq:shifted-preconditioned-matrix}
			\widetilde K_S:=S^\top(K+h^d I)S
			=
			K_S+h^d S^\top S,
			\qquad
			K_S:=S^\top K S.
		\end{equation}
		
		\begin{lemma}[Uniform spectral bounds for the shifted preconditioned operator]
			\label{lem:shifted-bpx}
			Let
			\[
			K_S:=S^\top K S,
			\qquad
			\widetilde K_S:=S^\top(K+h^d I)S
			=
			K_S+h^d S^\top S.
			\]
			Assume that the preconditioned matrix \(K_S\) has mesh-uniform spectral bounds,
			namely,
			\[
			\lambda_{\min}(K_S)\ge c_0,
			\qquad
			\lambda_{\max}(K_S)\le C_0,
			\]
			for some constants \(c_0,C_0>0\) independent of \(h\). Then there exists a
			constant \(C_*>0\), also independent of \(h\), such that
			\[
			\lambda_{\min}(\widetilde K_S)\ge c_0,
			\qquad
			\lambda_{\max}(\widetilde K_S)\le C_*.
			\]
			In particular, \(\widetilde K_S\) also has mesh-uniform spectral bounds.
		\end{lemma}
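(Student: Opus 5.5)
The plan is to compare Rayleigh quotients. Since $\widetilde K_S=K_S+h^dS^\top S$ and $S^\top S$ is symmetric positive semidefinite, for every $\bm z\neq0$ we have
\[
\bm z^\top\widetilde K_S\bm z=\bm z^\top K_S\bm z+h^d\|S\bm z\|^2\ge \bm z^\top K_S\bm z\ge c_0\|\bm z\|^2 .
\]
Taking the minimum over $\bm z$ gives $\lambda_{\min}(\widetilde K_S)\ge c_0$ at once. The lower bound is therefore immediate, and the work lies in the upper bound.

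For the upper bound we need $h^d\|S\bm z\|^2\le C\|\bm z\|^2$ with $C$ independent of $h$. Equivalently, $h^d\lambda_{\max}(S^\top S)=h^d\lambda_{\max}(SS^\top)=h^d\lambda_{\max}(B)$ must be uniformly bounded. I would argue this through $K$, using only the hypothesis rather than the explicit multilevel form of BPX.

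The first step is to note that the spectral bounds on $K_S$ give $c_0 I\le S^\top KS\le C_0 I$. Since $S$ is invertible (it is a nondegenerate factor of $B$), this is equivalent to $c_0 B^{-1}\le K\le C_0B^{-1}$, i.e.\ $B^{-1}\ge K/C_0$, so $B\le C_0 K^{-1}$. Hence
\[
\lambda_{\max}(B)\le C_0/\lambda_{\min}(K).
\]

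The second step, which is the main obstacle, is a mesh-uniform lower bound $\lambda_{\min}(K)\ge c\,h^d$. This is the discrete Poincaré/Friedrichs inequality for the stiffness matrix of the Dirichlet Laplacian on a quasi-uniform dyadic grid in the scaling used by the paper, where the term $h^dI$ plays the role of a mass matrix. Indeed, with finite-element scaling the matrix satisfies $\bm v^\top K\bm v=|v_h|_{H^1}^2\ge c_P\|v_h\|_{L^2}^2\simeq c_P h^d\|\bm v\|^2$, using mass-matrix equivalence $M\simeq h^dI$. I would cite these standard facts, Friedrichs' inequality and the quasi-uniform mass-matrix equivalence, rather than prove them.

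Combining the two steps gives $h^d\lambda_{\max}(S^\top S)\le C_0/c$, and therefore
\[
\lambda_{\max}(\widetilde K_S)\le C_0+C_0/c=:C_*.
\]

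If instead one prefers to use the explicit BPX factor $S$, the alternative is a direct count. Each row of $S$ collects $O(1)$ basis contributions per level, with level weights $h_k^{(2-d)/2}$. A Schur/Gershgorin bound then gives $\lambda_{\max}(SS^\top)\lesssim\sum_k h_k^{2-d}\cdot(\text{overlap factor})$, and one checks that multiplying by $h^d$ keeps this bounded. I would keep this as a fallback and rely mainly on the cleaner $K^{-1}$ comparison argument.
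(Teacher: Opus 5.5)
Your proposal is correct and is essentially the paper's argument: the lower bound comes from positive semidefiniteness of $h^dS^\top S$, and the upper bound combines the discrete Poincar\'e inequality $h^dI\le C_PK$ with $K_S\le C_0I$, giving the same constant $C_*=(1+C_P)C_0$ (your $c$ is $1/C_P$). The paper applies Poincar\'e directly to $\bm v=S\bm z$, i.e.\ $h^d\|S\bm z\|^2\le C_P\,\bm z^\top K_S\bm z\le C_PC_0\|\bm z\|^2$, which avoids your detour through $B\le C_0K^{-1}$ and does not need $S$ to be invertible.
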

		
		\begin{proof}
			For any \(\bm z\), let \(\bm v=S\bm z\). By the discrete Poincar\'e inequality
			on the Dirichlet grid, there exists a constant \(C_P>0\), independent of \(h\),
			such that
			\[
			h^d\|\bm v\|_2^2 \le C_P\,\bm v^\top K\bm v.
			\]
			Hence
			\[
			h^d\bm z^\top S^\top S\bm z
			=
			h^d\|S\bm z\|_2^2
			\le
			C_P\,\bm z^\top K_S\bm z.
			\]
			Therefore
			\[
			\bm z^\top \widetilde K_S \bm z
			=
			\bm z^\top K_S\bm z+h^d\bm z^\top S^\top S\bm z
			\le
			(1+C_P)\,\bm z^\top K_S\bm z.
			\]
			On the other hand, since \(h^dS^\top S\) is positive semidefinite,
			\[
			\bm z^\top \widetilde K_S \bm z \ge \bm z^\top K_S\bm z.
			\]
			Using the spectral bounds of \(K_S\),
			\[
			c_0\|\bm z\|_2^2
			\le
			\bm z^\top K_S\bm z
			\le
			C_0\|\bm z\|_2^2,
			\]
			we obtain
			\[
			c_0\|\bm z\|_2^2
			\le
			\bm z^\top \widetilde K_S \bm z
			\le
			(1+C_P)C_0\|\bm z\|_2^2.
			\]
			Thus
			\[
			\lambda_{\min}(\widetilde K_S)\ge c_0,
			\qquad
			\lambda_{\max}(\widetilde K_S)\le (1+C_P)C_0.
			\]
			Taking \(C_*:=(1+C_P)C_0\) completes the proof.
		\end{proof}
		
		The two-dimensional harmonic correction problem and the three-dimensional
		London-type system therefore fall into the same BPX--Schr\"odingerization
		framework. In both cases, the relevant output is a small number of linear
		observables of the preconditioned unknown. This leads to the following unified
		complexity comparison.
		
		\begin{theorem}[Complexity comparison for the linear correction stage]
			\label{thm:computational-perspective}
			Consider one time step of the reduced algorithm, and let the corresponding
			preconditioned linear system be written in the form
			\[
			\mathcal K_S \bm z=\bm g,
			\]
			where \( \mathcal K_S=K_S\) in the two-dimensional harmonic correction case,
			and \( \mathcal K_S=\widetilde K_S\) in the three-dimensional London-type case.
			Let the target observable be
			\[
			\mathcal O_\ell=\langle \bm c_{S,\ell},\bm z\rangle,
			\]
			where $\bm c_{S,\ell} = S^{\top} \bm c_\ell$.
			Then the following hold.
			
			\medskip
			\noindent
			\textup{(i) Classical cost.}
			A classical BPX-preconditioned iterative solver computes one right-hand side to
			accuracy \(\mathrm{tol}\) in
			\[
			\mathcal O\!\bigl(N_h\log(1/\mathrm{tol})\bigr)
			\]
			arithmetic operations, up to implementation-dependent polylogarithmic factors.
			
			\medskip
			\noindent
			\textup{(ii) Quantum cost.}
			Using the Schr\"odingerization-based quantum algorithm introduced above
			together with the BPX block-encoding of \( \mathcal K_S\), one can, for any
			target accuracy \(\mathrm{tol}\in(0,1)\), output an estimate
			\(\widetilde{\mathcal O}_\ell\) satisfying
			\[
			\bigl|\widetilde{\mathcal O}_\ell-\mathcal O_\ell\bigr|
			\le \mathrm{tol},
			\]
			with query complexity
			\[
			\mathcal O\!\bigl(
			\operatorname{poly}(d)\,\mathrm{tol}^{-1}\,
			\operatorname{polylog}(N_h)\,
			\operatorname{polylog}(1/\mathrm{tol})
			\bigr).
			\]
		\end{theorem}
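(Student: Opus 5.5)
The proof splits into the classical estimate (i) and the quantum estimate (ii). In both, the key structural input is that \(\mathcal K_S\) has mesh-uniform spectral bounds. In the two-dimensional harmonic correction case this is the BPX result of \cite{BPX1990,JLMY25}: \(\lambda_{\min}(K_S)\ge c_0\) and \(\lambda_{\max}(K_S)\le C_0\) with \(c_0,C_0\) independent of \(h\). In the three-dimensional London case it follows from Lemma~\ref{lem:shifted-bpx}, which gives \(\lambda_{\min}(\widetilde K_S)\ge c_0\) and \(\lambda_{\max}(\widetilde K_S)\le C_*\). So in both cases \(\kappa(\mathcal K_S)=O(1)\) uniformly in \(N_h\), and the two cases can be treated together from here on.

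For (i), I would run the preconditioned conjugate gradient method (or the stationary iteration \(\bm x^{\rm new}=\bm x^{\rm old}+B(\bm b-K\bm x^{\rm old})\)) with preconditioner \(B=SS^\top\). Since \(BK\) is similar to \(\mathcal K_S\), its contraction factor in the energy norm is bounded by a constant \(\rho<1\) independent of \(h\). Hence \(O(\log(1/\mathrm{tol}))\) iterations suffice. Each iteration needs one multiplication by the sparse matrix \(K\) (or \(K+h^dI\)), which costs \(O(N_h)\), and one application of the BPX preconditioner. Using the multilevel sum over \(O(\log N_h)\) nested levels with geometrically decreasing sizes, the preconditioner also costs \(O(N_h)\), up to a logarithmic factor depending on the implementation. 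Multiplying the per-iteration cost by the iteration count gives the stated \(O(N_h\log(1/\mathrm{tol}))\).

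For (ii), I would reduce directly to Theorem~\ref{thm:complexity}. Its hypotheses are a block-encoding of \(\mathcal K_S\) with normalization \(\operatorname{poly}(d)\log N_h\), spectral bounds of order \(\Theta(1)\), and the two state-preparation oracles. For \(K_S\) the block-encoding is \eqref{eq:encoding para Ks}. For \(\widetilde K_S=K_S+h^dS^\top S\), I would construct it by a linear combination of unitaries. I would block-encode \(S\) using the same structure-aware construction from \cite{JLMY25} that underlies \eqref{eq:encoding para Ks}, with normalization \(\alpha_S\), and multiply two such encodings to obtain \(S^\top S\) with normalization \(\alpha_S^2\). Adding \(h^d\) times this to \(K_S\) gives normalization \(O(d^2\log N_h)+h^d\alpha_S^2\). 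The second term is \(O(\operatorname{poly}(d)\log N_h)\) because \(h^d\|S^\top S\|\lesssim C_P\|K_S\|\) by the Poincaré bound used in the proof of Lemma~\ref{lem:shifted-bpx}. I still need to check that the encoding constant \(\alpha_S\) itself respects this scaling, and not only the operator norm. Once these hypotheses hold, Theorem~\ref{thm:complexity} gives query complexity \(O(\operatorname{poly}(d)\polylog(N_h)\,\mathrm{tol}^{-1}\polylog(1/\mathrm{tol}))\) for an additive \(\mathrm{tol}\)-estimate.

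I expect the main obstacle to be controlling the error-amplification factor \(\|S^\top\bm c_\ell\|\,\|\bm z\|\) in the overlap formula \(\mathcal O_\ell=\|\bm c_{S,\ell}\|\|\bm z\|\langle c_S|z\rangle\), together with the postselection success probability. The mesh-independent conditioning gives \(\|\bm z\|\le c_0^{-1}\|\bm g\|\). I would then argue that, under the normalization of \(\bm b\) and \(\bm c_\ell\) used in the setting of \cite{JLMY25} (point or gradient functionals and boundary data with \(O(1)\) discrete norms after BPX scaling), the product \(\|\bm c_{S,\ell}\|\|\bm g\|/c_0\) is \(\polylog(N_h)\). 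It then enters only the precision passed to amplitude estimation, contributing the factor \(\mathrm{tol}^{-1}\). The relaxation time \(T=\Theta(\log(1/\mathrm{tol}))\), coming from \(\lambda_{\min}=\Theta(1)\), combines with Hamiltonian simulation cost \(\alpha T+\log(1/\mathrm{tol})\) to give the \(\polylog(1/\mathrm{tol})\) factor. The success probability of the \(p\)-recovery is \(\Omega(1)\) since \(p_3=1/2\) is fixed.
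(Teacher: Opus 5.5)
Your argument is essentially the paper's: (i) rests on the mesh-independent BPX contraction factor, giving $O(\log(1/\mathrm{tol}))$ iterations at $O(N_h)$ cost each, and (ii) follows by invoking Theorem~\ref{thm:complexity} with \eqref{eq:encoding para Ks} in two dimensions and Lemma~\ref{lem:shifted-bpx} in three dimensions. The paper stops there, taking the block-encoding of $\widetilde K_S$ as supplied by the statement and absorbing the amplification factor $\|S^\top\bm c_\ell\|\,\|\bm z\|$ into Theorem~\ref{thm:complexity}, so your LCU construction and your claim that this factor is $\polylog(N_h)$ are unnecessary; the latter should also not be presented as proved, because it is an extra assumption on $\bm b$ and $\bm c_\ell$ that is not automatic (for difference-quotient gradient functionals it grows polynomially in $1/h$).
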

		
		\begin{proof}
			The classical estimate follows from the sparsity of the discretized operator
			and the mesh-uniform spectral bounds of the BPX-preconditioned system. In
			particular, after BPX preconditioning the iterative error decays geometrically
			with a mesh-independent contraction factor, so the number of iterations needed
			to reach accuracy \(\mathrm{tol}\) is \(O(\log(1/\mathrm{tol}))\). Since each
			iteration costs \(O(N_h)\), the total classical complexity is
			\(\mathcal O\!\bigl(N_h\log(1/\mathrm{tol})\bigr)\). The quantum estimate
			follows directly from Theorem~\ref{thm:complexity}, together with the
			block-encoding bound \eqref{eq:encoding para Ks} in the two-dimensional case
			and Lemma~\ref{lem:shifted-bpx} in the three-dimensional shifted case.
		\end{proof}
		
		\begin{remark}
			Theorem~\ref{thm:computational-perspective} identifies the relevant regime for a
			possible quantum advantage: the linear correction stage with small-output
			readout. Classically, even with optimal preconditioning, one still pays at
			least linearly in the number of spatial degrees of freedom \(N_h\). By
			contrast, the Schr\"odingerization-based quantum algorithm yields a cost that
			depends on \(N_h\) only through polylogarithmic factors when one estimates a
			fixed number of observables.
			
			To compare the two costs in terms of the target accuracy \(\mathrm{tol}\), we
			consider a \(d\)-dimensional uniform grid with \(N_h=N_x^d\). Since our spatial
			discretization is first-order accurate, one has \(O(h)\) spatial error with
			\(h\sim N_x^{-1}\). Matching this with the target tolerance yields
			\[
			N_x\sim \mathrm{tol}^{-1},
			\qquad
			N_h\sim \mathrm{tol}^{-d}.
			\]
			Accordingly, the classical complexity becomes
			\[
			\mathcal O\!\bigl(\mathrm{tol}^{-d}\log(1/\mathrm{tol})\bigr),
			\]
			whereas the quantum complexity for estimating each observable becomes
			\[
			\mathcal O\!\bigl(\mathrm{tol}^{-1}\polylog(1/\mathrm{tol})\bigr).
			\]
			This does not imply an advantage for recovering the full field
			\(\bm h_a^m\); rather, the gain appears precisely when only a few physically
			relevant quantities are needed, such as local gradients, probe values, or
			other linear functionals of the outer expansion.
            In such cases our algorithms have {\it exponential} advantage over the classical algorithms in the dependence on \(N_h\), while
           its dependence on \(\mathrm{tol}^{-1}\) remains essentially linear up to
           polylogarithmic factors.
          
		\end{remark}
		
\section{Numerical verification of the nonlinear-to-linear decomposition}
\label{subsec:numerical-verification}

The quantum solution of the linear subproblems arising from elliptic or parabolic
equations has already been studied extensively in the literature. Therefore, the
purpose of the present numerical experiments is not to revalidate the linear
quantum solver itself, but rather to assess the PDE reduction developed in
Section~\ref{sec:reduction}. More precisely, we focus on two questions:
whether the reduced outer reconstruction approximates the full nonlinear
solution well away from the vortex cores, and whether the boundary-induced
\(h_{\bm a}\)-drift term is numerically relevant in the present regime.

We consider the square domain
$
\Omega=[0,1]^2\subset\mathbb R^2
$
with Dirichlet boundary condition
\[
u^\varepsilon(t,\bm x)=g(\bm x)=e^{i\phi(\bm x)},
\qquad \bm x\in\partial\Omega,
\]
where
\begin{equation}\label{eq:numerical-phi-choice-revised-final}
	\phi(x,y)
	=
	2\,\operatorname{atan2}\!\Bigl(y-\tfrac12,\;x-\tfrac12\Bigr)
	+
	0.3\,\sin(2\pi x)\sin(2\pi y).
\end{equation}
The first term gives boundary degree \(2\), while the second term is a smooth
perturbation that preserves the degree. The initial state contains two
degree-\(+1\) vortices located at
\begin{equation}\label{eq:numerical-vortex-initial-revised-final}
	\bm a_1(0)=(0.35,0.55),\qquad
	\bm a_2(0)=(0.70,0.40).
\end{equation}

We compare the two reduced vortex models introduced earlier: the free-space
approximation \emph{M1} in \eqref{eq:M1-model}, and the bounded-domain coupled model
M2 in \eqref{eq:M2-model}. Their
vortex trajectories are denoted by
\[
\bm a^{(1)}(t),\qquad \bm a^{(2)}(t).
\]
For a given vortex configuration \(\bm a(t)\), the harmonic correction
\(h_{\bm a(t)}\) is computed from the linear boundary-value problem in
Section~\ref{sec:harmonic correction}, and the corresponding reduced outer field
is reconstructed as
\[
u_{\bm a}(t,\bm x)
=
\exp\bigl(i(\Theta_{\bm a(t)}(\bm x)+h_{\bm a(t)}(\bm x))\bigr).
\]
Thus \emph{M1} uses \(\bm a^{(1)}(t)\) in the reconstruction, whereas \emph{M2} uses
\(\bm a^{(2)}(t)\).

The full nonlinear solution of \eqref{eq:NLS-2D} is computed with initial data
\begin{equation}\label{eq:numerical-initial-data-revised-final}
	u^\varepsilon(0,\bm x)
	=
	\Bigg(\prod_{j=1}^{2}
	f\Big(\tfrac{|\bm x-\bm a_j(0)|}{\varepsilon}\Big)\Bigg)
	\exp\bigl(i(\Theta_{\bm a(0)}(\bm x)+h_{\bm a(0)}(\bm x))\bigr),
\end{equation}
where \(f(\rho)\approx \tanh(\rho/\sqrt{2})\) is a standard radial core profile.
The harmonic problem is discretized by the standard five-point finite-difference
method and solved by a sparse linear solver, while the nonlinear equation is
evolved by a Strang splitting method on a sufficiently fine mesh with a small
time step. In all experiments below, the final time is fixed as
$
T=0.05.
$

Since no exact solution is available, we use the numerically computed full
solution \(u^\varepsilon\) as the reference solution. To avoid contamination
from the singular vortex cores, all errors are measured on a common masked
region. In the present experiments, the mask is centered at the reference
vortex locations and uses a fixed radius
$
r_{\mathrm{mask}}=0.1.
$
Accordingly, we define
\begin{equation}\label{eq:common-mask-region-final}
	\Omega_{\mathrm{mask}}(t)
	:=
	\Omega\setminus\bigcup_{j=1}^{2}
	B_{r_{\mathrm{mask}}}\bigl(\bm a^{(2)}_j(t)\bigr).
\end{equation}
This ensures that \emph{M1} and \emph{M2} are compared on the same outer region at each time,
while the mask size is kept identical for all tested values of \(\varepsilon\).
The meaning of the mask can be seen from the left panel of
Figure~\ref{fig:mask-and-phase}: the dashed circles indicate the disks
removed from the computational domain, and the error is evaluated only outside
these circles.

Let \(u_{\bm a^{(1)}}\) and \(u_{\bm a^{(2)}}\) be the reduced outer
reconstructions corresponding to \emph{M1} and \emph{M2}. We define
\begin{equation}\label{eq:numerical-boundary-effect-errors-revised-final}
	E_{M1}^\varepsilon(t)
	:=
	\|u^\varepsilon(t,\cdot)-u_{\bm a^{(1)}}(t,\cdot)\|_{L^2(\Omega_{\mathrm{mask}}(t))},
	\qquad
	E_{M2}^\varepsilon(t)
	:=
	\|u^\varepsilon(t,\cdot)-u_{\bm a^{(2)}}(t,\cdot)\|_{L^2(\Omega_{\mathrm{mask}}(t))}.
\end{equation}
Before computing the \(L^2\)-difference, we align the reconstructed field with
the reference solution by a global constant phase on
\(\Omega_{\mathrm{mask}}(t)\). In the numerical implementation, we report the
corresponding aligned relative \(L^2\)-errors.

Figure~\ref{fig:vortex-trajectories-final} shows the vortex trajectories
generated by \emph{M1} and \emph{M2} for \(\varepsilon=0.025\). Over the short time interval
considered here, the two reduced trajectories remain qualitatively close, but a
visible separation is already present near the final time. This indicates that
the boundary-induced \(h_{\bm a}\)-drift is not dominant at the trajectory level
for the present time horizon, yet it is not negligible either.
\begin{figure}[t]
	\centering
	\includegraphics[width=0.52\textwidth]{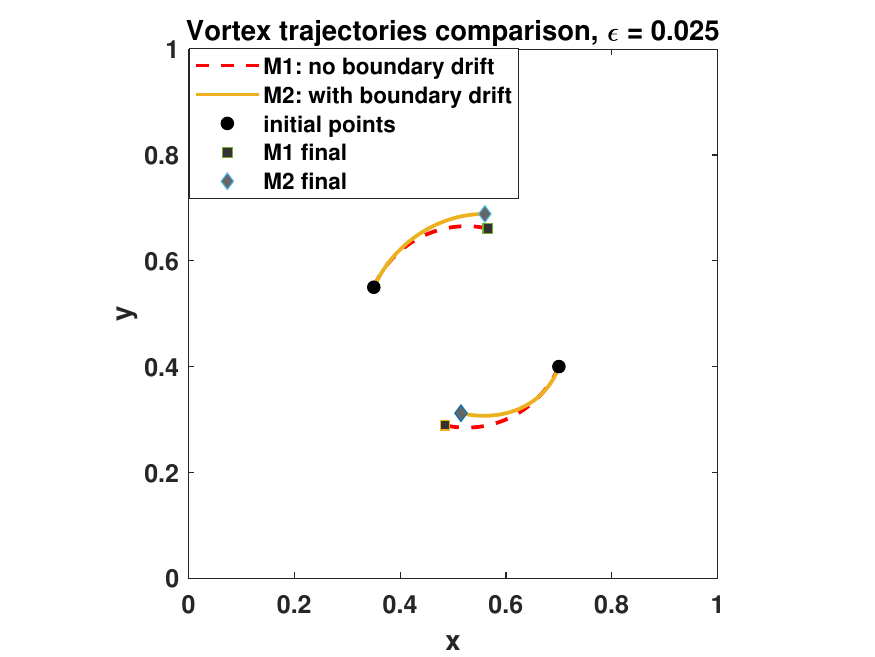}
	\caption{Vortex trajectories generated by \emph{M1} and \emph{M2} for
		\(\varepsilon=0.025\).}
	\label{fig:vortex-trajectories-final}
\end{figure}

To compare the phase structures, we use the phase mismatch
\begin{equation}\label{eq:phase-mismatch-revised-final}
	\delta_\phi(x,t)
	:=
	\Arg\!\bigl(
	u^\varepsilon(x,t)\,\overline{u_{\bm a}(x,t)}
	\bigr).
\end{equation}
Figure~\ref{fig:mask-and-phase} shows the common masked region and the phase
comparison at the final time. In the left panel, the background colors represent
the modulus \(|u^\varepsilon|\): bright colors correspond to values close to
\(1\), while dark blue regions indicate small values of \(|u^\varepsilon|\).
These dark blue regions correspond to the vortex cores and their nearby
neighborhoods, since \(|u^\varepsilon|\approx 0\) near a vortex center. The
dashed circles mark the excluded core disks used in the common mask. In the
right panel, the phase mismatch plots show that, away from these excluded core
regions, the reduced outer reconstruction captures the phase structure of the
full solution well. 

\begin{figure}[htbp]
	\centering
	\begin{minipage}[t]{0.48\textwidth}
			\centering
			\includegraphics[width=\textwidth]{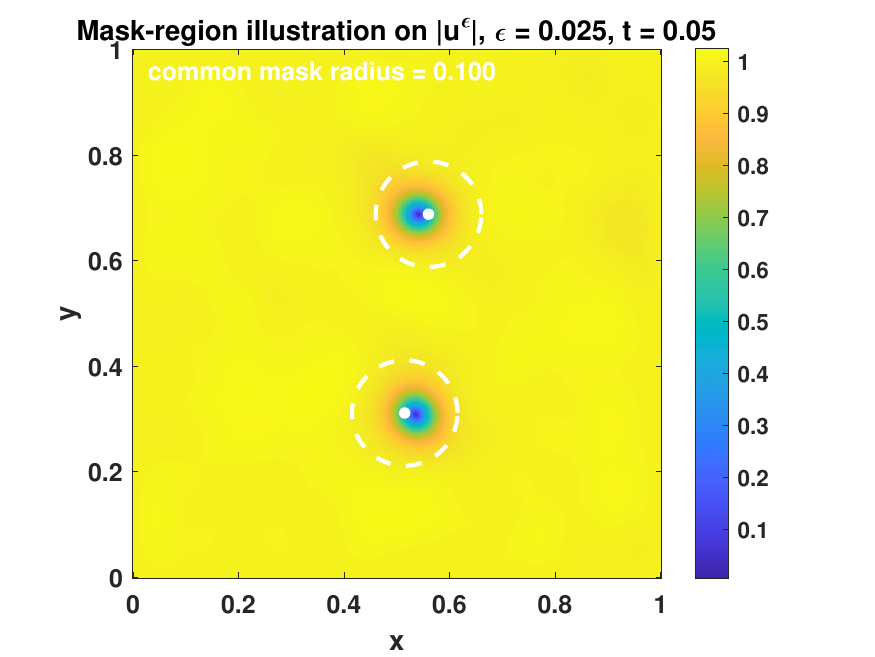}
		\end{minipage}\hfill
	\begin{minipage}[t]{0.51\textwidth}
			\centering
			\includegraphics[width=\textwidth]{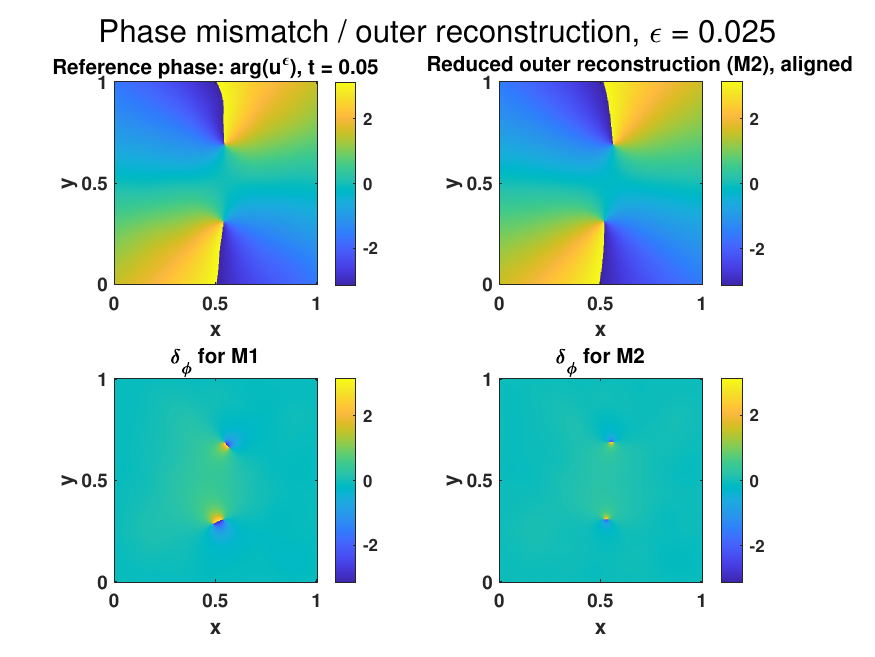}
		\end{minipage}
	\caption{Mask region and phase comparison.}
	\label{fig:mask-and-phase}
\end{figure}

Finally, Figure~\ref{fig:error-vs-epsilon-final} reports the final-time masked
errors for
\[
\varepsilon\in\{0.2,\,0.1,\,0.05,\,0.025\}.
\]
For both reduced models, the error decreases with a rate roughly of  $O(\varepsilon)$,
which is consistent with the validity of the vortex-based outer approximation in
the small-\(\varepsilon\) regime. At the same time, \emph{M2} consistently
outperforms \emph{M1}, and the gap becomes more visible for smaller \(\varepsilon\).
For instance, at \(\varepsilon=0.025\), the final masked error of \emph{M2} is
approximately one half of that of \emph{M1}. This shows that the boundary-induced
\(h_{\bm a}\)-drift term provides a quantitatively meaningful improvement in the
small-core regime.

We emphasize, however, that incorporating this correction does not change the
basic hybrid structure of the algorithm. Even for \emph{M2}, the classical part still
evolves only the low-dimensional vortex variables, while the quantum part is
used only for a small number of linear quantities associated with the harmonic
correction \(h_{\bm a}\), rather than for reconstructing the full field on the
entire mesh. Hence the amount of information exchanged between the classical and
quantum components remains small, so the potential quantum advantage of the
overall framework is preserved.

Overall, these experiments support the nonlinear-to-linear decomposition
developed in Section~\ref{sec:reduction}. They show that the reduced outer field
already provides a good approximation away from the vortex cores, and that the
linear harmonic correction encoded in \(h_{\bm a}\) yields a quantitatively
useful improvement, especially for smaller values of \(\varepsilon\).

\begin{figure}[t]
	\centering
	\includegraphics[width=0.60\textwidth]{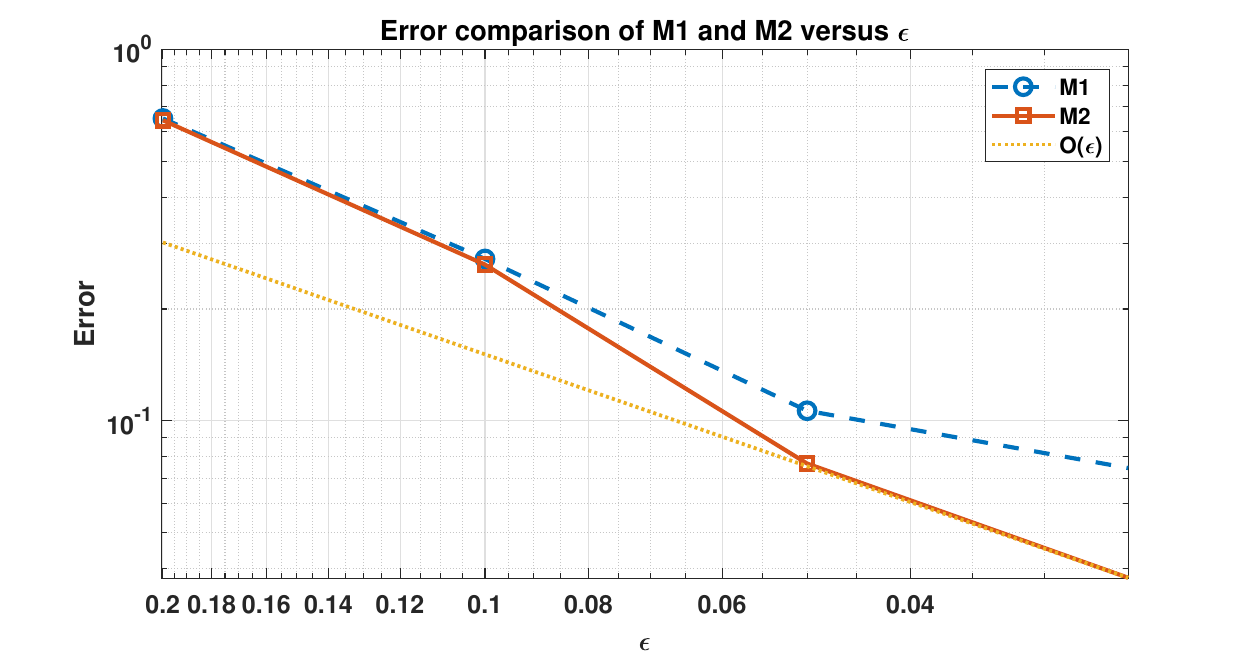}
	\caption{Final-time masked errors \(E_{M1}^\varepsilon(T)\) and
		\(E_{M2}^\varepsilon(T)\) versus \(\varepsilon\).}
	\label{fig:error-vs-epsilon-final}
\end{figure}

		\section{Conclusions}
		We proposed a framework  for  nonlinear complex nonlinear partial differential equations with Ginzburg-Landau potential that are asymptotically accurate in the strong nonlinear regime. In this
		regime, the nonlinear dynamics is decomposed into a low-dimensional evolution
		of vortices and a linear outer expansion problem governed by linear Poisson equation.  This leads naturally to a
		hybrid quantum-classical strategy, where the vortex dynamics is evolved
        classically and the linear outer problem is treated quantum mechanically via the 
        Schr\"odingerization technique.

The analysis shows that this framework admits quantum acceleration in the
small-output regime, where one seeks only a few physically relevant observables
rather than a full reconstruction of the field. For the two-dimensional
nonlinear Schr\"odinger equation, the resulting complexity exhibits an {\it
exponential} improvement in the dependence on the spatial problem size, while
the dependence on the target accuracy remains essentially linear up to
polylogarithmic factors. We further show that the same principle extends to
dissipative Ginzburg--Landau vortex dynamics and to vortex filaments in
three-dimensional superconductivity. The numerical results support the validity
of the reduction and indicate that, for the test configuration considered here,
the boundary-induced \(h_{\bm a}\)-drift is weak. Overall, the present work
suggests a concrete route toward quantum algorithms for strongly nonlinear
complex scalar field equations, including nonlinear Schr\"odinger, heat, and
wave equations with Ginzburg--Landau-type nonlinearity.

        In the future we will investigate the possibility of extending
        this method to broader classes of nonlinear PDEs.

        \section*{Acknowledgments}
SJ and NL were supported by NSFC grant No. 12341104,  the Shanghai Pilot Program for Basic Research, the Shanghai Jiao Tong University 2030 Initiative and the Fundamental Research Funds for the Central Universities. SJ was also partially supported by the NSFC grant No. 92270001.
NL also acknowledges funding from the Science and Technology Program of Shanghai, China (21JC1402900), the Science and Technology Commission of Shanghai Municipality (STCSM) grant no. 24LZ1401200 (21JC1402900) and NSFC grant No.12471411.
SJ, NL and CM were supported by the  Science and Technology Innovation Key R\&D Program of Chongqing grant No. CSTB2024TIAD-STX0035.
CM was partially supported by NSFC grant No. 12501607, the Science and Technology Commission of Shanghai Municipality (No.22DZ2229014).
		\bibliographystyle{plain} 
		\bibliography{Refnonlinear1}
	\end{document}